\newif\ifRapportRecherche
\def\cal{\mathcal}
 \newtheorem{proposition}{Proposition}
 \newtheorem{lemma}{Lemma}
\newtheorem{definition}{Definition}
 \newtheorem{corollary}{Corollary}
\title{Two-choice regulation in heterogeneous closed networks}
\author{Christine Fricker  \and Nicolas Servel}
\address{C. Fricker and Nicolas Servel are with INRIA Paris-Rocquencourt Domaine de
  Voluceau, 78153 Le Chesnay, France.
  \url{Christine.Fricker@inria.fr}}%
\date{\today}
\begin{document}

\begin{abstract}
A heterogeneous closed  network with $N$ one-server queues with finite capacity and one infinite-server queue is studied. A target application is bike-sharing systems. Heterogeneity is taken into account through clusters whose queues have the same parameters. Incentives to the customer to go to the least loaded one-server queue among two chosen within a cluster are investigated. By mean-field  arguments,  the  limiting queue length    stationary distribution as $N$ gets large is analytically tractable. Moreover,  when all customers follow incentives, the probability that a queue is empty or full is approximated. Sizing the system to  improve performance is reachable under this policy. 
\end{abstract}

\keywords{Bike-sharing systems; stochastic model; incentives; clusters; mean-field; heterogeneous systems}
\maketitle
\section{Introduction}
\label{sec:intro}
{\bf Product-form networks.}  A large literature deals with closed Jackson networks also called Gordon-Newell networks. First they present an explicit  product-form stationary distribution (see \cite{Kelly-1}). Second they appear as complex systems in a wide range of applications as communication, computer, biology  and transport networks.

The problem is that this nice property turns out to be useless for large systems because of the exponential growth of the state space. Some  recent research describes the asymptotic behavior as these networks get large, in general both the number of nodes and customers and the ratio of the two numbers tending to some constant $\lambda$ (see \cite{Kogan-1}, \cite{Malyshev-6}, \cite{Fayolle-7} and reference therein). The main result is the existence of a critical value of $\lambda$ such that under this value, the system is stable with each finite set of queues being asymptotically independent with  geometric distributions for queue lengths, while above this value, some queues, those with the maximum so-called utilization, behave as bottlenecks with a infinite mean number of customers while the others are stable. See also \cite {Anselmi-1} for recent advances on a multi-class model. 

{\bf Mean-field techniques.} However, such product-form networks do not cover the large framework of applications. Without such a invariant measure, for homogeneous models, mean-field techniques allow to obtain the limiting steady-state of the system as it gets large. They are popular for two decades. The main idea is to obtain the limit of the empirical measure process as a dynamical system,  deterministic solution of an ODE, then to study the equilibrium points of this system. In general, it is proved that there is a  unique equilibrium point. Less easy  to prove is that the invariant measure of the empirical measure process  concentrates as the network gets large  to this equilibrium point. It remains in most cases the hard part of the work. Such convergence is out of the scope of the 
paper here. 

When the network is not homogeneous, the techniques can extend but are sometimes difficult to apply. To preserve a discrete state space, the network is divided into clusters whose number remains fixed as the system gets large (see \cite{Gast-1} in another context).

{\bf Motivation.} The target application here is bike-sharing systems. They have seen explosive growth last years, giving place to lot of research. Velib' in Paris plays an important role, launching the largest program, even now the largestone  outside China. The problem is to manage these systems  in order to maintain resources, both bikes and available spots, where the users need them. Redistribution by trucks is developed, though it affects only  a small number of bikes (around 3 000 moves by trucks per day in Velib compared to 100 000 bike trips). Incentives to users  by the operator are more promising because more scalable. It is called natural regulation,  made possible because the user can know the system state in real time via  his smartphone for example. The issue which is addressed here is the impact of a two-choice strategy. It  consists for the user to choose two stations to return his bike, then returning it in the least loaded one. If it regulates very well a homogeneous system (see  \cite{FrickerGast-velib} for details), one can wonder what happens in a heterogeneous system. 

More precisely, the system consists in the following: A user goes to a station, takes a bike and returns it to another station. But he has to face the lack of resources, both bikes and available spots. Indeed, when arriving in a station with no bikes, the user has to decide, either to leave the system or to find a bike in a station nearby. Then, when returning in a station with no available spot, to return in a station nearby. Note that he has to return his bike due to  the penalty in case he leaves the bike. The point  here is that the user chooses his return station, by choosing between two stations near its destination the least loaded one. The operator can help the  users to behave like by incentives. Therefore the case where only a fraction  of the customers choose is investigated.

 Furthermore the dimensioning problem for the operator is addressed: to find the  system fleet size and  the station capacities in order to obtain a system which performs well.

{\bf A model.}
Such a system can be seen as a closed network. Customers are bikes, with a  fixed fleet size.  They go to  two kinds of nodes: stations  as one-server queues where the service time is the user inter arrival time  to this station, and routes as infinite-server queues where the service time is the trip time on this route. The model is proposed in  \cite{George-1} for a vehicle-sharing system. More precisely, in  \cite{George-1},   the user leaves the system if no bike is available when he arrives at a station, and with stations of infinite capacity, a spot is always available when returning a bike. 

The main drawback of the previous models is that it does not take into account the finite capacity of the stations and the related strategies of the users to be able to return their bikes. 
For that, the model proposed here is more simple in some sense, while the biking users are not distinguished, but it is still  a closed network, with  one-server queues for stations, and one infinite-server queue containing the pool of biking users. 

In this model, when joining a saturated one-server queue, the customer reattempts in another queue, after a time with  the same distribution, until he returns his bike. The model could be refined, taking into account the research of a bike for a customer  arriving at a station with no bike, or taking a  mean time for looking for a station to return the bike smaller than the mean trip time. We claim that these refinements do  not change  basically the study.

The network is heterogeneous. Indeed, queues are grouped in   clusters. In each cluster, the queues have the same parameters. Basically,  clusters are large and the customer chooses two stations within a cluster. Other models, may be more close for realistic ones, where some stations are isolated among stations with other parameters are out of the scope of the paper. Even if mean-field techniques apply, the limiting steady-state would be analytically untractable except by  numerical simulations of the dynamical system.

{\bf The results.}
The aim of the paper is to investigate the impact of a two choice strategy on the limiting steady-state of the queue lengths as the network gets large. 

The proposed method is mean-field. The limit as $N$ gets large of the empirical measure process is obtained and it is proved that it has a unique equilibrium point. The key argument is that the equilibrium point has a probabilistic interpretation, in term of the equilibrium point of the dynamical system associated to a open set of $N$ queues when joining the least loaded queue among two queues, with a arrival-to-service ratio $\rho r_i$ in cluster $i$, where $\rho$ is the unique solution of a fixed point equation. For that, a monotonicity property in the arrival rate is used. Then the equilibrium point is indeed the concentration point of the queue length distributions as the system gets large. These results can be straightforwardly extended to the case where only a portion $\beta_i$ of customers joining cluster $i$ choose among two queues while the others go to one single queue.

Once these convergence results are stated, an interesting part is the analysis of the limit. The limiting queue length distribution can be viewed as a function of the total proportion of customers per queue $s$ by a parametric curve with parameter $\rho$. It allows qualitative and quantitative results on the system limiting behavior. We analyze the probability that a station is empty or full (of customers) called {\em system performance}. This quantity is minimal in a cluster for a very short interval of $\rho$, corresponding nevertheless to a wide interval of $s$. The value of the probabilities for queues in different clusters can be approximately derived as a function of $s$ on these plateaux, so this function is quite well understanding. 

The influence of the different parameters, as the number of queues per cluster, the capacities of the queues and the  proportion of customers per queue $s$, can be discussed. The main conclusion is that choosing the capacities and the total fleet size can just allow to obtain the best performance in one cluster. To decrease this value, the number of stations per cluster must be changed, in order to balance the system.
  
{\bf Related works.} Though optimization is the main hot topic about bike-sharing systems (see \cite {Nair-1}, \cite{Raviv-1},  \cite{Raviv-2}, \cite{Chemla-1}, \cite{Jost-1} and others),  few has been done to understand it as a heterogeneous stochastic network. It is mainly due to the complexity of the system.

It is  to our knowledge the first stochastic model  devoted to bike-sharing systems is in \cite{George-1}. As a BCMP network \cite{Baskett-1}, such a network has a product-form  stationary measure. For a study of this large system, see \cite{Malyshev-6} and reference therein.

There are two companion papers of this one. First \cite{FrickerGast-velib} deals with a homogeneous model where different regulation strategies are explored.  Simple models to  study the system behavior under incentives to choose and redistribution by trucks  are proposed and analyzed. Second, a model with clusters is proposed in \cite{velib-aofa} but for a basic bike-sharing system, not taken into account incentives or redistribution. Paper \cite{velib-aofa} is a first step to work with a heterogeneous model. It allows to understand how to manage heterogeneity. Our present work has a completely different content, focusing on two-choice strategy, even if Section~\ref{sec:convergence}   has some similarities on methods used in \cite{FrickerGast-velib}, which are clearly presented.

{\bf Outline of the paper.} Section \ref{sec:model} deals with the model description. Section~\ref{sec:convergence} presents mean-field convergence results. In Section~\ref{sec:performance}, the limiting stationary  queue length distribution is investigated. As an example, for the two-cluster case, the probability that a queue is empty or full  is studied as a function of $s$. It gives its qualitative behavior as well as  quantitative bounds. In Section~\ref{sec:sizing}, the influence of the parameters is discussed, especially for the sizing problem.  Section~\ref{sec:conclusion} gives a discussion, mainly on model limitations, and the conclusion of  the paper.

\section{Model Description}
\label{sec:model}
 This section deals with the description of a closed queueing network. The heterogeneity is modeled by clusters. The key point is to take into account the  finite capacities at the one-server queues and the resulting route of the customers. Moreover  the behavior of the system under a two-choice  strategy when  joining the one-server queues is addressed. 

Consider $N$ one-server queues numbered from $1$ to $N$, one infinite-server queue numbered $0$ and a set of $M$ customers. Unless  queue $0$ is specified, let us call a queue a one-server  queue. In the system, both $N$ and $M$ are large, with the total proportion of customers per queue $M/N$ tending to a constant $s$ as $N$ tends to infinity, which is a key quantity in terms of sizing. The queues are grouped  in $C$ clusters, $C\geq 1$, such that, in each cluster, the queues have the same parameters. There are $N_i$ queues with capacity $K_i$ in cluster $i$, with $N_i/N$ tending to $\alpha_i$, when $N$ gets large. A customer leaves  a queue of cluster $i$ according to a Poisson process with rate $\lambda_i$. If there is no customer in the queue, nothing happens. She joins the infinite-sever queue where service time has an  exponential distribution with parameter  $\mu$. Then she chooses to join cluster $j$ with probability $\gamma_j$. For that,  she chooses two queues at random in cluster $j$, and the system indicates the least loaded among the two queues, ties being solved at random. She returns at this queue if it is possible. Otherwise, she remains in the infinite-server queue and thus reattempts selecting another cluster, after some time still exponentially distributed with parameter $\mu$,  then a third one if the two queues chosen in the second one are saturated, until she joins a one-server queue. All the interarrival and service times are assumed to be independent.

The key state process is then described.  Let the process $U^N(t)=(U_{i,k}^N(t),1\leq i\leq C,0\leq k\leq K_i)$ be  defined where $U_{i,k}^N(t)$ is the proportion of queues of cluster $i$ with more than or equal to $k$ customers at time $t$ (the number of queues of cluster $i$ with more than $k$ customers divided by the number $N_i$ of queues of cluster $i$). It is a time continuous Markov process irreducible on the finite state space $\displaystyle{{\cal{U}}_C=\left\{u=(u_{i,k})_{1\leq i\leq C,0\leq k\leq K_i},k\mapsto u_{i,k}\text{ decreasing},\; u_{i,k}\geq 0,\; u_{i,0}=1\right\}}$, whose $\cal{Q}$-matrix is given by, for $u\in \cal{U}_C$,
\[
\begin{cases} \cal{Q}(u,u-\frac{1}{N_i}e_{i,k}) &= \lambda_{i}N_i(u_{i,k}-u_{i,k+1})1_{k>0}\\ \cal{Q}(u,u+\frac{1}{N_i}e_{i,k+1}) &= {\mu}N(s-\sum\limits_{c=1}^{C}\alpha_c \sum\limits_{k=1}^{K_c}u_{c,k})\gamma_i(u_{i,k}^2-u_{i,k+1}^2)1_{k<K_i}.\end{cases}
\]
The first transition corresponds to the departure of a customer from a queue of cluster $i$ with $k$ customers (to queue $0$), while the second transition to the arrival of a  customer in a queue of cluster $i$ with $k$ customers (from queue $0$).

\section{Convergence Results}
\label{sec:convergence}
 As $N$ gets large, the system behaves as the solution of a dynamical system. It is given by the following proposition.

\subsection{Convergence to the dynamical system}
\begin{proposition}
If $(U^N(0))$ converges in distribution to $x$
then the Markov process $(U^N(t))$ converges in distribution to the unique solution $(u(t))$ of the following ODE, for $1\leq i\leq C$,
\begin{equation}
\left\{
\begin{aligned}
u_{i,0}(t) &=1\\
 \dot{u}_{i,k}(t) & = -\lambda_i(u_{i,k}(t)-u_{i,k+1}(t)1_{k<K_i})\\
&\quad +\frac{\mu\gamma_i}{\alpha_i}(s-\sum\limits_{c=1}^{C}\alpha_c \sum\limits_{k=1}^{K_c}u_{c,k}(t))(u_{i,k-1}^2(t)-u_{i,k}^2(t)),\;1\leq k \leq K_i
\end{aligned}
\right.
\label{sysdeuxclusters}
\end{equation}
with $u(0)=x$.
\end{proposition}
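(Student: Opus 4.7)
The plan is to apply a standard Kurtz-type fluid limit argument: use the Dynkin / Doob–Meyer decomposition to split $U^N$ into a drift integral plus a vanishing martingale, then identify the drift with the right-hand side of~\eqref{sysdeuxclusters} and close the argument via Cauchy--Lipschitz and Gronwall.

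First, I would write the semimartingale decomposition of each coordinate. Since $U^N$ lives in the compact, finite-dimensional set $\cal{U}_C$ and has uniformly bounded jump rates (bounded by $(\lambda_i + \mu s) N_i$), each coordinate admits a representation
\[
U^N_{i,k}(t) = U^N_{i,k}(0) + \int_0^t F^N_{i,k}\bigl(U^N(r)\bigr)\, dr + M^N_{i,k}(t),
\]
where $F^N_{i,k}(u) = \sum_{u'} (u'_{i,k} - u_{i,k})\, \cal{Q}(u,u')$ is the drift read from the $\cal{Q}$-matrix, and $M^N_{i,k}$ is a local martingale whose predictable quadratic variation is bounded by $(\text{max jump size})^2 \times (\text{total rate})\times t = O(1/N)$ uniformly on $\cal{U}_C$. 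Doob's $L^2$ inequality then gives $\sup_{t\le T}|M^N_{i,k}(t)|\to 0$ in probability.

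Second, I would compute $F^N_{i,k}$ from the generator. The departure transition $u\to u-\frac{1}{N_i}e_{i,k}$ contributes $-\lambda_i(u_{i,k}-u_{i,k+1})\mathbf{1}_{k>0}$, while the arrival transition $u\to u+\frac{1}{N_i}e_{i,k}$ (i.e.\ the one indexed by $k-1$ in the $\cal{Q}$-matrix) contributes $+\tfrac{\mu N}{N_i}\gamma_i\bigl(s-\sum_c\alpha_c\sum_{k'}u_{c,k'}\bigr)(u_{i,k-1}^2-u_{i,k}^2)\mathbf{1}_{k\le K_i}$. Using the assumption $N_i/N\to\alpha_i$, $F^N_{i,k}$ converges uniformly on $\cal{U}_C$ to the map $F_{i,k}$ defined by the right-hand side of~\eqref{sysdeuxclusters}. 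Here a tiny care is needed because the drift in the empirical model uses $N_i$ while the external arrival rate scales with $N$; this is where the ratio $\mu\gamma_i/\alpha_i$ enters.

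Third, I would observe that $F$ is a polynomial on the compact state space $\cal{U}_C$, hence globally Lipschitz there. Cauchy--Lipschitz yields a unique global solution $u(t)$ to~\eqref{sysdeuxclusters} with $u(0)=x$, staying inside $\cal{U}_C$ by invariance of the boundary.

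Fourth, to conclude I would combine the three previous steps. Writing $u(t) = x + \int_0^t F(u(r))\,dr$ and subtracting from the decomposition of $U^N$, the difference $\delta^N(t)=\sup_{s\le t}|U^N(s)-u(s)|$ is bounded by
\[
|U^N(0)-x| + \sup_{s\le T}\|F^N-F\|_\infty\, T + L\int_0^t \delta^N(r)\,dr + \sup_{s\le T}|M^N(s)|,
\]
where $L$ is the Lipschitz constant of $F$. Gronwall's lemma then gives $\sup_{t\le T}|U^N(t)-u(t)|\to 0$ in probability, which, since $u$ is continuous and deterministic, upgrades to convergence in distribution in the Skorokhod space $D([0,T],\cal{U}_C)$. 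Tightness is automatic from the compactness of $\cal{U}_C$ and the uniformly bounded drift and jumps.

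The only mildly delicate step is the drift identification: one must be careful that the arrival rate in the $\cal{Q}$-matrix is scaled by $N$ (the total number of queues) while the jump size is $1/N_i$, so the limiting coefficient is $\mu\gamma_i/\alpha_i$ and not $\mu\gamma_i$. All remaining ingredients (uniform boundedness of rates, vanishing martingale, Lipschitz drift, Gronwall) are routine given the compactness of $\cal{U}_C$ and the polynomial form of $F$.
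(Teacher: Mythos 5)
Your proposal is correct and is precisely the classical Kurtz-type argument (semimartingale decomposition, vanishing martingale of order $O(1/\sqrt{N})$, drift identification with the $N/N_i\to 1/\alpha_i$ rescaling producing the $\mu\gamma_i/\alpha_i$ coefficient, Lipschitz drift plus Gronwall) that the paper itself invokes when it states ``the proof is classical and omitted.'' Nothing is missing; you have simply written out the standard proof the authors chose not to include.
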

The proof is  classical and omitted.
\subsection{A unique equilibrium point for the dynamical system}\label{uniqueness}
Let us introduce the following notations. Let $1\leq i\leq C$.
\begin{align}\label{Lambda}
R_i &\stackrel {def}{=}\mu \gamma_i / \lambda_i \alpha_i\\
r_i &\stackrel{def}{=}R_i/\max\limits_{j}{R_j}\\
{\text and } \nonumber\\
 \Lambda  &\stackrel {def}{=}\displaystyle{1/\max\limits_{j}{R_j}}.
\end{align}
$R_i$ is called the utilization of a queue in cluster $i$ and $r_i$ the relative utilization of a queue in cluster $i$.

An equilibrium point of this dynamical system is thus a solution $\bar{u}$ of
\begin{equation}\label{fpe}
\left\{
\begin{array}{llll}
\smallskip
u_{i,0}=1\\
0=(u_{i,k}-u_{i,k+1}1_{k<K_i})-{\rho}r_i(u_{i,k-1}^2-u_{i,k}^2)&, 1\leq k\leq K_i .
\end{array}
\right.
\end{equation}
where 
\begin{align}\label{defrho}
\rho\stackrel{def}{=}\Lambda^{-1} (s-\sum\limits_{c=1}^{C}\alpha_c \sum\limits_{k=1}^{K_c}u_{c,k}).
\end{align}

The proof of the uniqueness of the equilibrium point, i.e. the solution of equation~\eqref{fpe} is given in   \cite{FrickerGast-velib}. The sketch of the proof is recalled here for the clarity of the exposition.

The proof is  the following.
First, let $i\in \{1,\ldots, C\}$ be fixed. There is a probabilistic interpretation for the equilibrium point $\bar{u}_i$ of the ODE~\eqref{sysdeuxclusters}. By equation~\eqref{fpe}, $\displaystyle{(\bar{u}_{i,k})_{0\leq k\leq K_i}}$ is the limiting distribution of the stationary number of customers in a queue in a system of $L$ queues with the same capacity $K_i$, where service times at each queue are independent with exponential distribution with mean $1$ and where the  arrival process at the system is a Poisson process with rate ${\rho}r_iL$, customers choosing then two queues at random among the $L$ and going to the least loaded, ties being solved at random, where $\rho$ is solution of equation~\eqref{defrho}. Inter arrival times and service times are assumed to be independent. Note that ${\rho}$ depends on all the $\displaystyle{(\bar{u}_{i,k})_{0\leq k\leq K_i}}$, $1\leq i\leq C$. 
Thus its existence and uniqueness has to be proved.
 Two main arguments are needed.
 
First, let such
 a system  of $L$ queues with a two-choice strategy be with a fixed arrival rate $\rho L$.
There exists a unique equilibrium point $\bar{v}$ for the associated dynamical system $(v(t))$, limit as $L$ gets large of the process $(V^N(t))$ where $V^N(t)$ is the vector of the proportion at time $t$ of queues with more than $k$ customers, $0\leq k\leq K$. It is given by the following lemma.

\begin{lemma}\label{2-choice-fp-existence}
 There exists a unique solution  $\bar{v}=\nu_{\rho,K}$ of 
 \begin{equation}\label{two-choice}
\left\{
\begin{array}{llll}
\smallskip
v_{0}=1\\
0=(v_{k}-v_{k+1}1_{k<K})-{\rho}(v_{k-1}^2-v_{k}^2)&, 1\leq k\leq K .
\end{array}
\right.
\end{equation} 
\end{lemma}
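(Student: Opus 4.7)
The plan is to collapse the $K$-dimensional system~\eqref{two-choice} to a scalar fixed-point equation by a telescoping argument, and then conclude via the intermediate value theorem. I assume $\rho>0$; the case $\rho=0$ is trivial and forces $v_k=0$ for $k\ge 1$. First I would sum equation~\eqref{two-choice} from index $j=k$ up to $j=K$: the left-hand sides telescope to $v_k$, while the right-hand sides telescope to $\rho(v_{k-1}^2-v_K^2)$. Hence every candidate solution must satisfy
\[
v_k \;=\; \rho\bigl(v_{k-1}^2 - v_K^2\bigr), \qquad 1\le k\le K,
\]
equivalently the backward recursion $v_{k-1}^2 = v_K^2 + v_k/\rho$.

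Next I would parameterise by the unknown $t := v_K\in[0,1]$. Setting $\widetilde v_K(t):=t$ and, for $k$ decreasing from $K$ to $1$,
\[
\widetilde v_{k-1}(t) \;:=\; \sqrt{t^2 + \widetilde v_k(t)/\rho},
\]
and writing $G(t) := \widetilde v_0(t)$, existence and uniqueness of a solution of~\eqref{two-choice} becomes existence and uniqueness of a $t\in[0,1]$ with $G(t)=1$. A backward induction on $k$ shows that each $\widetilde v_k$ is continuous and strictly increasing on $[0,1]$, because the primitive $(t,x)\mapsto\sqrt{t^2+x/\rho}$ is strictly increasing in each argument; in particular $G$ is continuous and strictly increasing.

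To conclude, I observe that $G(0)=0$ and, by the same backward induction taken at $t=1$, every $\widetilde v_k(1)\ge 1$, whence $G(1)\ge 1$. The intermediate value theorem then produces a unique $t^\ast\in(0,1]$ with $G(t^\ast)=1$; setting $\bar v_k:=\widetilde v_k(t^\ast)$ gives the sought equilibrium. Non-negativity is built into the square root, while monotonicity $\bar v_{k-1}\ge\bar v_k$ follows by a second backward induction, starting from $\bar v_{K-1}^2-\bar v_K^2=\bar v_K/\rho\ge 0$ and propagating through $\bar v_{k-1}^2-\bar v_k^2=(\bar v_k-\bar v_{k+1})/\rho\ge 0$, which is the original equation at level~$k$. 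The main obstacle is securing the strict monotonicity of $G$, since without it the scalar equation $G(t)=1$ could admit several roots and the equilibrium would fail to be unique; the composition-of-monotone-maps argument in the second paragraph is what makes this step go through.
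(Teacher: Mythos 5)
Your argument is correct, but it is not the route the paper takes: the paper gives no proof at all for finite $K$, deferring entirely to the companion paper \cite{FrickerGast-velib}, and the argument there (reflected in equation~\eqref{two-choice-1rec}) collapses the system by a \emph{forward} recursion $v_{k+1}=\rho(v_k^2-1)+v_1$ parameterised by the unknown $v_1$ and closed by the boundary condition $v_{K+1}=0$. You instead telescope from the other end and run a \emph{backward} recursion $v_{k-1}=\sqrt{v_K^2+v_k/\rho}$ parameterised by $v_K$. This buys you something real: each step of your recursion is manifestly continuous and strictly increasing in the parameter (a square root of a sum of increasing, nonnegative quantities), so strict monotonicity of $G$ plus the intermediate value theorem finishes the proof in two lines, whereas monotonicity of $v_{K+1}$ as a function of $v_1$ along the forward recursion needs more care, since $x\mapsto\rho(x^2-1)+v_1$ is increasing only for $x\ge 0$ and this must be tracked along the iteration. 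One point you should make explicit: passing from~\eqref{two-choice} to your backward recursion selects the positive branch of the square root, so what you prove is uniqueness among \emph{nonnegative} solutions --- already for $K=1$ the quadratic $\rho v_1^2+v_1-\rho=0$ has a second, negative root. That is the intended meaning of the lemma, since $\nu_{\rho,K}$ is a tail distribution and the state space forces $0\le v_k\le 1$ and $k\mapsto v_k$ decreasing, but it deserves one sentence rather than the parenthetical remark you give it.
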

\begin{proof}
If $K$ is infinite, this result is known for a long time (see for example  \cite{Vvedenskaya-1}   or \cite{Mitzenmacher-1}) and have an explicit form 
\begin{align}\label{formule_magique}
\bar{v}_k=\rho^{2^{k-1}-1},\; k\geq 0.
\end{align}
In case of finite capacity, the result is given  in \cite{FrickerGast-velib}, where the proof is detailed. Nevertheless, an explicit expression is not available.
\end{proof} 

Second, the following monotonicity property is very useful.
\begin{lemma}\label{2-choice-fp-monotonicity}
$\nu_{\rho,K}$ is an increasing  function of  $\rho$.
\end{lemma}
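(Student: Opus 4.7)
The plan is to realize $\nu_{\rho,K}$ as the long-time limit of a cooperative ODE and obtain the monotonicity from a standard Kamke-style comparison.

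First, I would view $\nu_{\rho,K}$ as the unique equilibrium of the dynamical system invoked in the probabilistic interpretation preceding Lemma~\ref{2-choice-fp-existence},
\[
\dot v_k(t) = -\bigl(v_k(t)-v_{k+1}(t)1_{k<K}\bigr) + \rho\bigl(v_{k-1}^2(t)-v_k^2(t)\bigr),\qquad 1\le k\le K,
\]
with boundary condition $v_0\equiv 1$, i.e.\ the mean-field limit of the open $L$-queue two-choice system with arrival rate $\rho L$. I would use the fact, established in \cite{FrickerGast-velib}, that the solution starting from the empty initial condition $v(0)=(1,0,\ldots,0)$ converges to $\nu_{\rho,K}$ as $t\to\infty$.

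Second, I would check that this ODE is cooperative on the forward-invariant cone
\[
\mathcal{C}=\{v:\, 1=v_0\ge v_1\ge\cdots\ge v_K\ge 0\}.
\]
Indeed, the only nonzero off-diagonal Jacobian entries of the right-hand side are $\partial\dot v_k/\partial v_{k-1}=2\rho\,v_{k-1}\ge 0$ and $\partial\dot v_k/\partial v_{k+1}=1\ge 0$ for $k<K$, while on $\mathcal{C}$ the right-hand side is coordinate-wise non-decreasing in $\rho$ because $v_{k-1}^2-v_k^2\ge 0$. Forward-invariance of $\mathcal{C}$ follows from a direct boundary check: on $\{v_k=v_{k+1}\}$ one verifies $\dot v_k-\dot v_{k+1}\ge 0$, and on $\{v_K=0\}$ one has $\dot v_K=\rho\, v_{K-1}^2\ge 0$.

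Third, for $0<\rho_1<\rho_2$ I would start both flows from $v(0)=(1,0,\ldots,0)$ and apply Kamke's comparison theorem for quasi-monotone ODEs to conclude that $v^{(\rho_1)}(t)\le v^{(\rho_2)}(t)$ coordinate-wise for all $t\ge 0$. Passing to $t\to\infty$ and using uniqueness of the equilibrium yields $\nu_{\rho_1,K}\le\nu_{\rho_2,K}$. Strict monotonicity follows by noting that at $v=\nu_{\rho_1,K}$ the $\rho_2$-vector field has a strictly positive coordinate (some $v_{k-1}^2-v_k^2>0$), so $v^{(\rho_2)}(t)$ immediately leaves $\nu_{\rho_1,K}$ upward in that coordinate before relaxing to $\nu_{\rho_2,K}$.

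The main obstacle I anticipate is the boundary bookkeeping required for forward-invariance of $\mathcal{C}$: this invariance is precisely what makes the non-negative off-diagonal Jacobian entries and the monotonicity of the vector field in $\rho$ valid along trajectories. Once this invariance is established, Kamke's theorem and the passage to the limit $t\to\infty$ are routine.
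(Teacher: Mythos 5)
Your proof is correct in outline but takes a genuinely different route from the paper. The paper proves the stronger statement $V^{N,\rho}_k(t)\leq_{st}V^{N,\rho'}_k(t)$ at the level of the finite-$N$ stochastic system, via an explicit coupling in which the extra arrivals of the faster system are coloured red and a case-by-case analysis of arrival events (with ``repaint'' and ``exchange'' operations that keep the blue customers identical in both systems) preserves the blue sub-configuration; monotonicity of $\nu_{\rho,K}$ then follows by letting $N\to\infty$ and $t\to\infty$. You instead work entirely at the deterministic level: you identify $\nu_{\rho,K}$ as the equilibrium of the mean-field ODE, check that this ODE is cooperative on the forward-invariant ordered cone $\mathcal{C}$ and that its vector field is non-decreasing in $\rho$ there, and invoke Kamke's comparison theorem. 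This buys a much shorter argument that avoids the delicate bookkeeping of residual service times in the coupling (which the paper itself calls tedious); what it costs is two extra facts the coupling does not need: forward invariance of $\mathcal{C}$ (which you address --- just make sure the face $\{v_1=v_0=1\}$ is included, where $\dot v_1=v_2-1\le 0$) and convergence of the trajectory started from $(1,0,\dots,0)$ to $\nu_{\rho,K}$. You defer the latter to \cite{FrickerGast-velib}, but it is cleaner to note that it already follows from your own setup: the right-hand side of your ODE is componentwise non-negative at $(1,0,\dots,0)$, so by cooperativity the trajectory is non-decreasing in $t$ and, being confined to the compact set $\mathcal{C}$, converges to an equilibrium, which must be $\nu_{\rho,K}$ by Lemma~\ref{2-choice-fp-existence}. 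Finally, only weak monotonicity is needed downstream (the term $\Lambda\rho$ in \eqref{def_s} already makes the fixed-point map strictly increasing), so your strictness argument, while valid, is optional; the paper's coupling likewise yields only the non-strict inequality.
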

\begin{proof}
It can be proved by a coupling argument that, if $\rho\leq \rho'$ then, for each $k$, $1\leq k\leq K$, for each $t>0$, $V^{N,\rho}_k(t)\leq_{st}V^{N,\rho'}_k(t)$.
As usual, such a proof is tedious but necessary to avoid mistakes. It has been skipped in \cite{FrickerGast-velib}. It is given here in Appendix. Taking the limit as $N$ tends to $+\infty$, it ends the proof.
\end{proof}
\begin{proposition}\label{fp_uniqueness}
There exists a unique solution $\bar{u}=\displaystyle{(\bar{u}_{i,k})_{1\leq i\leq C,\; 0\leq k\leq K_i}}$ to equation~\eqref{fpe}, given, for  $1\leq i\leq C$  and $1\leq k\leq K_i$,
$\bar{u}_{i,k}=\nu_{\rho r_i, K_i}$  where $\rho$ is the unique solution of 
\begin{align}\label{def_s}
s=\Lambda \rho+ \sum \limits_{c=1}^{C}\alpha_c \sum\limits_{k=1}^{K_c}\nu_{\rho r_c,K_c}(k).
\end{align}

\end{proposition}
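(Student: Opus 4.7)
The plan is to turn the coupled fixed-point system \eqref{fpe} into a single scalar equation in $\rho$ by using $\rho$ as a parameter that decouples the clusters, and then to show that this scalar equation has a unique root.

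First I would fix $\rho \geq 0$ and observe that once $\rho$ is frozen, equation \eqref{fpe} splits: for each cluster $i$ independently, the vector $(\bar{u}_{i,k})_{0\leq k\leq K_i}$ must satisfy \eqref{two-choice} with arrival parameter $\rho r_i$ and capacity $K_i$. By Lemma \ref{2-choice-fp-existence}, this pins down $\bar{u}_{i,k}=\nu_{\rho r_i,K_i}(k)$ uniquely for every $i$ and $k$. Plugging these values back into the definition \eqref{defrho} of $\rho$ and rearranging yields precisely the scalar self-consistency equation \eqref{def_s}. Hence solving \eqref{fpe} is equivalent to finding a $\rho\geq 0$ that satisfies \eqref{def_s}, and the claimed form of $\bar{u}$ will follow.

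Next I would prove that \eqref{def_s} has a unique solution. Set
\[
\Phi(\rho)\stackrel{def}{=}\Lambda\rho+\sum_{c=1}^{C}\alpha_c\sum_{k=1}^{K_c}\nu_{\rho r_c,K_c}(k).
\]
The linear term $\Lambda\rho$ is strictly increasing, and Lemma \ref{2-choice-fp-monotonicity} combined with $r_c\geq 0$ gives that each $\nu_{\rho r_c,K_c}(k)$ is nondecreasing in $\rho$, so $\Phi$ is strictly increasing on $[0,\infty)$. Continuity in $\rho$ follows from the fact that \eqref{two-choice} expresses each $\nu_{\rho,K}(k)$ as a continuous function of $\rho$ (the recursion defining it is polynomial in $\rho$ and in the $v_j$'s, which themselves depend continuously on $\rho$ through the same recursion together with the boundary constraint at $k=K$). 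At $\rho=0$, the unique solution of \eqref{two-choice} is the empty queue, so $\Phi(0)=0$. Since $0\leq\nu_{\rho r_c,K_c}(k)\leq 1$, the second sum is uniformly bounded by $\sum_c\alpha_c K_c$, while $\Lambda\rho\to\infty$, giving $\Phi(\rho)\to\infty$. Hence $\Phi$ is a continuous strictly increasing bijection from $[0,\infty)$ onto $[0,\infty)$, so for every admissible $s$ there is exactly one $\rho$ with $\Phi(\rho)=s$.

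The main obstacle here is not in the scalar argument, which is elementary once one has the right picture, but rather in justifying the monotonicity and continuity of $\rho\mapsto\nu_{\rho,K}$: the monotonicity is the content of Lemma \ref{2-choice-fp-monotonicity} (relying on the coupling in its proof), and continuity requires a brief check that the implicit equation \eqref{two-choice} with its boundary condition selects a continuous branch. Everything else reduces to strict monotonicity plus the intermediate value theorem, so once these analytic facts about $\nu_{\rho,K}$ are granted the uniqueness and the explicit form $\bar{u}_{i,k}=\nu_{\rho r_i,K_i}$ follow immediately.
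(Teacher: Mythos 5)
Your proposal is correct and follows essentially the same route as the paper's proof: decouple \eqref{fpe} cluster by cluster via Lemma~\ref{2-choice-fp-existence} to reduce everything to the scalar equation \eqref{def_s}, then use the monotonicity from Lemma~\ref{2-choice-fp-monotonicity} together with the linear term $\Lambda\rho$ to get a unique root. Your version merely makes explicit the continuity check and the limits $\Phi(0)=0$, $\Phi(\rho)\to\infty$, which the paper states without elaboration.
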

\begin{proof}
 Let $\bar{u}$ be an equilibrium point of the dynamical system. Then, by definition, $\bar{u}$ is solution of ~\eqref{fpe}. By Lemma \ref{2-choice-fp-existence}, $\bar{u}_i=\nu_{\rho r_i,K_i}$ where $\rho$ is given by equation ~\eqref{defrho}.  Therefore, the existence and uniqueness of $\bar{u}$ reduces  to the existence of such a $\rho$ solution of  ~\eqref{defrho}. This equation can be rewritten as equation~\eqref{def_s}.

By Lemma ~\ref{2-choice-fp-monotonicity}, the right hand side of equation ~\eqref{def_s} is an increasing function of $\rho$, from 0 to $+\infty$. Thus there exists a unique $\rho>0$ solution of \eqref{def_s}.
\end{proof}

\subsection{Convergence of the invariant measures.}

The following proposition ensures  that  the limit as $N$ gets large of the stationary proportion of queues of cluster $i$ with more that $k$ customers is given by the equilibrium point of the ODE  i.e.  $\bar{u}_{i,k}=\nu_{\rho r_i, K_i}$.

\begin{proposition}\label{convergence_des_mesures_invariantes}
The sequence of invariant measures of $(U^N(t))$ converges as $N$ gets large to the Dirac mass at $\bar{u}$ i.e.  $U^N(\infty)$ converges to the  deterministic vector $\bar{u}$, as $N$ tends to infinity.
\end{proposition}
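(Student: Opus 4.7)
The plan is to combine a compactness argument with a Lyapunov-function-based characterization of the long-time behaviour of the ODE~\eqref{sysdeuxclusters}. The route is the standard one for mean-field models: show that the sequence $(\pi^N)$ of invariant measures of $U^N$ is tight, identify any subsequential limit as invariant under the deterministic semigroup of the ODE, and then show that the unique equilibrium $\bar u$ is a global attractor so that $\delta_{\bar u}$ is the only possibility.

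The first two steps should be essentially automatic. Since $\mathcal U_C$ is a compact subset of $\mathbb R^{\sum_i (K_i+1)}$ (all coordinates lie in $[0,1]$), tightness of $(\pi^N)$ is immediate and Prokhorov yields a subsequential weak limit $\pi^\infty$ supported on $\mathcal U_C$. To check that $\pi^\infty$ is invariant under the flow $\Phi_t$ of \eqref{sysdeuxclusters}, I would test against $f\in C_b(\mathcal U_C)$: invariance of $\pi^N$ gives $\int f\,d\pi^N=\int \mathbb E_u[f(U^N(t))]\,d\pi^N(u)$, and the mean-field convergence of the first proposition, combined with continuous dependence of the ODE on its initial condition, lets one pass to the limit to obtain $\int f\,d\pi^\infty=\int f\circ\Phi_t\,d\pi^\infty$. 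Thus $\pi^\infty$ is $\Phi_t$-invariant.

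The decisive third step is to show that every trajectory of \eqref{sysdeuxclusters} in $\mathcal U_C$ converges to $\bar u$, as this forces the support of any $\Phi_t$-invariant probability measure to be $\{\bar u\}$. I would establish global attractivity by exhibiting a Lyapunov function $V:\mathcal U_C\to\mathbb R_+$ with $V(\bar u)=0$, nonincreasing along all trajectories, and whose set $\{\dot V=0\}$ contains only equilibria. Combined with compactness of $\mathcal U_C$ and the uniqueness of the equilibrium (Proposition~\ref{fp_uniqueness}), LaSalle's invariance principle then yields $u(t)\to\bar u$ for every initial condition, and hence $\pi^\infty=\delta_{\bar u}$. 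Since this holds along every convergent subsequence, the whole sequence $(\pi^N)$ converges to $\delta_{\bar u}$. The explicit candidate, carried out in~\refLyapounov\ and motivated by \cite{FrickerGast-velib}, is a weighted sum over clusters of relative-entropy-type functionals of $\bar u_i$ with respect to $u_i(t)$, with weights tuned to $\alpha_i$ and $\lambda_i$ so as to absorb the cross-cluster coupling that enters through the total-mass quantity $\rho$.

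The main obstacle is precisely this Lyapunov step. In the homogeneous two-choice model the free-energy functional dissipates in a transparent way, whereas here the closed-network constraint couples all clusters through the nonlinear quantity $\rho=\Lambda^{-1}(s-\sum_c\alpha_c\sum_k u_{c,k})$, so the entropy production of each cluster cannot be analysed in isolation. Finding weights that make the sum of cluster-wise dissipations sign-definite despite the flows between clusters, and checking that $\{\dot V=0\}$ reduces to the equilibrium set, is where the real work of the proof lies; the tightness and invariance arguments are soft by comparison.
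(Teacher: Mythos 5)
Your three-step architecture (tightness on the compact space $\mathcal U_C$, identification of subsequential limits as invariant measures of the flow of~\eqref{sysdeuxclusters}, then collapse of the support to $\{\bar u\}$) is the standard and correct route, and the first two steps are indeed soft. But the proposal has a genuine gap at exactly the point you flag: you never actually produce a Lyapunov function and verify its dissipation, you only name a candidate (a weighted sum of relative-entropy functionals) and acknowledge that checking sign-definiteness of $\dot V$ under the cross-cluster coupling through $\rho=\Lambda^{-1}(s-\sum_c\alpha_c\sum_k u_{c,k})$ is ``where the real work lies.'' That step cannot be waved at: uniqueness of the equilibrium point (Proposition~\ref{fp_uniqueness}) alone does not rule out that a $\Phi_t$-invariant probability measure is carried by a periodic orbit or some other compact invariant set disjoint from $\bar u$, so without global attractivity the conclusion $\pi^\infty=\delta_{\bar u}$ simply does not follow. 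Even in the homogeneous finite-capacity case the entropy dissipation is not a routine computation, and the closed-network constraint makes the heterogeneous version genuinely harder; as written, the decisive lemma is asserted, not proved.

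For what it is worth, you should not feel that you missed something the authors supply: the paper states Proposition~\ref{convergence_des_mesures_invariantes} without any proof, the introduction explicitly declares that this concentration of invariant measures ``remains in most cases the hard part of the work'' and is ``out of the scope of the paper,'' and the appendix that the source reserves for a Lyapunov argument is not present in the text. So your sketch honestly locates the same missing ingredient that the paper itself leaves open. But judged as a proof of the statement, the proposal is incomplete until the Lyapunov function is exhibited, $\dot V\le 0$ is established on all of $\mathcal U_C$, and $\{\dot V=0\}$ is shown to reduce to $\{\bar u\}$.
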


\subsection{Generalization to the case of incentives.}
Assume now that only a fraction $\beta_i$ of customers joining cluster $i$ choose among two queues when leaving the infinite-server queue.
Proposition~\ref{sysdeuxclusters} can be rewritten as follows.
\begin{proposition}
If $(U^N(0))$ converges in distribution to $x$
then the Markov process $(U^N(t))$ converges in distribution to the unique solution $(u(t))$ of the following ODE, for $1\leq i\leq C$,
\begin{equation}
\left\{
\begin{aligned}
u_{i,0}(t) &=1\\
 \dot{u}_{i,k}(t) & = -\lambda_i(u_{i,k}(t)-u_{i,k+1}(t)1_{k<K_i})\\
&\quad +\frac{\mu\gamma_i}{\alpha_i}\left(s-\sum\limits_{c=1}^{C}\alpha_c \sum\limits_{k=1}^{K_c}u_{c,k}(t)\right)\\
& \left(\beta_i(u_{i,k-1}^2(t)-u_{i,k}^2(t))-(1-\beta_i)(u_{i,k-1}(t)-u_{i,k}(t))\right),\;1\leq k \leq K_i
\end{aligned}
\right.
\label{sysdeuxclusters_r}
\end{equation}
with $u(0)=x$.
\end{proposition}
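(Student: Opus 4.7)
The argument parallels the classical mean-field proof of the first proposition above (stated there without proof), the only change being that the arrival kernel to cluster $i$ is now a mixture: a fraction $\beta_i$ of arrivals apply the join-the-shorter-of-two rule and a fraction $1-\beta_i$ pick a single queue uniformly at random.

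First, I would rewrite the $\mathcal{Q}$-matrix of $(U^N(t))$. Departure transitions are unchanged. For an arrival to cluster $i$, the probability that a uniform single pick lands on a queue containing exactly $k$ customers is $u_{i,k}-u_{i,k+1}$, while the corresponding probability for the minimum of two independent uniform picks is $u_{i,k}^2-u_{i,k+1}^2$. Replacing the pure two-choice arrival term of the earlier $\mathcal{Q}$-matrix by $\beta_i$ times itself plus $(1-\beta_i)$ times the single-choice analogue gives the modified arrival transition
\[
\mathcal{Q}\!\left(u,u+\tfrac{1}{N_i}e_{i,k+1}\right)=\mu N\Bigl(s-\sum_{c=1}^{C}\alpha_c\sum_{\ell=1}^{K_c}u_{c,\ell}\Bigr)\gamma_i\bigl[\beta_i(u_{i,k}^2-u_{i,k+1}^2)+(1-\beta_i)(u_{i,k}-u_{i,k+1})\bigr]\ind_{k<K_i}.
\]

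Next, I would apply the standard mean-field convergence argument (Kurtz's theorem, or the direct martingale approach used for the first proposition). The state space $\mathcal{U}_C$ is compact, the jump sizes are $O(1/N)$, and the infinitesimal drift obtained from the above kernel is polynomial in $u$ and therefore globally Lipschitz on $\mathcal{U}_C$. Dynkin's formula combined with Doob's maximal inequality yields uniform convergence in probability on every compact time interval of $(U^N(t))$ towards the solution of the ODE whose drift coincides, after division by $N_i$ and use of $N_i/N\to\alpha_i$, with the right-hand side of~\eqref{sysdeuxclusters_r} (the departure term $-\lambda_i(u_{i,k}-u_{i,k+1}\ind_{k<K_i})$ being identical to that of the first proposition).

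The only delicate step is uniqueness of the ODE solution, but this follows immediately from Cauchy-Lipschitz since the right-hand side is polynomial, hence Lipschitz, on the compact set $\mathcal{U}_C$. Convergence of the initial data $U^N(0)\Rightarrow x$ together with this uniqueness identifies every weak limit of the tight sequence $(U^N)$ as $(u(t))$, yielding the claimed convergence in distribution. I do not expect any genuine obstacle: the mixture structure enters linearly through the scalar $\beta_i$, so compactness, Lipschitz regularity, and uniqueness are all preserved verbatim from the $\beta_i\equiv 1$ case.
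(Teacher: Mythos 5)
Your proposal is correct and follows exactly the classical mean-field route that the paper itself invokes but omits (writing the modified $\mathcal{Q}$-matrix, applying Kurtz's theorem or the martingale/Dynkin argument, and using Cauchy--Lipschitz uniqueness of the polynomial-drift ODE on the compact state space $\mathcal{U}_C$), so there is no substantive divergence to report. One remark: your arrival kernel carries the sign $+(1-\beta_i)(u_{i,k}-u_{i,k+1})$, which is the correct convex combination of the two-choice and single-choice rules and agrees with the paper's own equilibrium equation for $\bar{u}$; the minus sign in front of $(1-\beta_i)$ in the displayed ODE of the statement is therefore a typo that your computation silently corrects.
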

 Then proposition~\ref{fp_uniqueness} can also be rewritten as follows. 
\begin{proposition}\label{fp_uniqueness_r}
There exists a unique solution $\bar{u}=\displaystyle{(\bar{u}_{i,k})_{1\leq i\leq C,\; 0\leq k\leq K_i}}$ to 
\begin{equation}\label{fpe_r}
\left\{
\begin{array}{llll}
\smallskip
u_{i,0}=1\\
0=(u_{i,k}-u_{i,k+1}1_{k<K_i})-{\rho}r_i
\left(\beta_i(u_{i,k-1}^2-u_{i,k}^2)+(1-\beta_i)(u_{i,k-1}-u_{i,k})\right),\\
 1\leq k\leq K_i .
\end{array}
\right.
\end{equation}
 given, for  $1\leq i\leq C$  and $1\leq k\leq K_i$, by
$\bar{u}_{i,k}=\nu_{\rho r_i, K_i}$  where $\rho$ is the unique solution of 
\begin{align}\label{def_s_c}
s=\Lambda \rho+ \sum \limits_{c=1}^{C}\alpha_c \sum\limits_{k=1}^{K_c}\nu_{\rho r_c,K_c}(k).
\end{align}
\end{proposition}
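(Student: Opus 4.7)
My plan is to follow verbatim the strategy used in the proof of Proposition~\ref{fp_uniqueness}, replacing the pure two-choice per-cluster fixed point equation by its Bernoulli-mixture analogue. Concretely, I would (i) show that for each fixed $\rho>0$ and each cluster index $i$, the per-cluster system obtained from~\eqref{fpe_r} by freezing $\rho$ admits a unique solution, which (abusing notation as the proposition statement does) is denoted $\nu_{\rho r_i, K_i}$ in the mixed-choice setting; (ii) prove that $\rho \mapsto \nu_{\rho r_i, K_i}$ is componentwise increasing; and (iii) conclude that the RHS of~\eqref{def_s_c} is a continuous, strictly increasing function of $\rho$ going from $0$ to $+\infty$, so that this equation has a unique positive root, which then determines $\bar{u}$.

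For step (i), the same probabilistic interpretation as in the pure two-choice case applies: the per-cluster fixed point is the large-$L$ limit of the stationary distribution of a typical queue in a system of $L$ identical queues of capacity $K_i$, with unit service rate and Poisson arrivals at rate $\rho r_i L$, where each arriving customer is routed according to the join-the-shorter-of-two rule with probability $\beta_i$ and to a single uniformly chosen queue with probability $1-\beta_i$. Existence of a stationary distribution is immediate by irreducibility on a finite state space. Uniqueness of the solution of the per-cluster equation~\eqref{fpe_r} follows by the same finite-capacity argument as in \cite{FrickerGast-velib}: reading~\eqref{fpe_r} as a forward recurrence in $k$ starting from $u_{i,0}=1$, the value $u_{i,k+1}$ is a strictly monotone function of the free parameter $u_{i,1}$, and the admissibility constraints $u_{i,k}\in[0,1]$ and $u_{i,k}$ nonincreasing, together with the boundary condition at $k=K_i$, select a unique value of $u_{i,1}$.

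For step (ii), I would adapt the coupling argument of Lemma~\ref{2-choice-fp-monotonicity} to the mixed-choice dynamics. Given $\rho \leq \rho'$, I couple the two pre-limit $L$-queue systems on a common probability space by using the same Poisson clocks for departures, the same dominating Poisson arrival process at rate $\rho' r_i L$ with the smaller system obtained by independent thinning with probability $\rho/\rho'$, and the same i.i.d.\ Bernoulli$(\beta_i)$ marks and uniform queue samples for the routing. It then suffices to check that each routing branch (both the two-choice rule applied to a common pair of sampled queues, and the single-choice rule applied to a common sampled queue) preserves pointwise domination when an arrival is added to the dominating system only, which is exactly the step already carried out in the appendix for the pure two-choice case and is trivial for the single-choice branch. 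Passing to the large-$L$ limit and to stationary distributions gives the desired monotonicity of $\nu_{\rho r_i, K_i}$ in $\rho$.

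The main obstacle will be step~(ii): extending the coupling construction of the appendix to the mixture so that the same pair of sampled queues (or the same single sampled queue) is fed into both systems through the same Bernoulli mark, while still maintaining the pointwise order. Once this is in place, step~(iii) is immediate, since continuity of $\rho \mapsto \nu_{\rho r_c, K_c}$ follows from~(i) by standard dependence-on-parameters for the polynomial recurrence~\eqref{fpe_r}, the RHS of~\eqref{def_s_c} tends to $+\infty$ because $\Lambda\rho$ does, and monotonicity forces uniqueness of the root, yielding $\bar{u}_{i,k}=\nu_{\rho r_i, K_i}$ at that unique $\rho$.
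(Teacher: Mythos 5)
Your proposal follows exactly the paper's route: the paper's own (very terse) proof simply asserts that Lemma~\ref{2-choice-fp-existence} extends to the mixed-choice per-cluster equation and that the argument of Proposition~\ref{fp_uniqueness} (monotonicity of the right-hand side of~\eqref{def_s_c} in $\rho$ via the coupling of Lemma~\ref{2-choice-fp-monotonicity}, hence a unique root) then goes through unchanged, which is precisely your steps (i)--(iii). You in fact supply more detail than the paper does, including the correct reading of the notational abuse $\bar{u}_{i,k}=\nu_{\rho r_i,K_i}$ as the $\beta_i$-dependent fixed point, so the proposal is correct and essentially identical in approach.
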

The proof uses that Lemma~\ref{2-choice-fp-existence} remains true in the extended framework. Furthermore Proposition~\ref{convergence_des_mesures_invariantes} holds. 

To conclude, convergence results adapt to the case where $\beta\in]0,1[$.
It has a great importance in applications because in practice, only a fraction of users can follow incentives to go to a queue given by an advisor.

\section{Performance Analysis}
\label{sec:performance}
In this model, by Propositions~\ref{fp_uniqueness} and~\ref{convergence_des_mesures_invariantes}, $\rho$ is a parameter for $s$ by equation~\eqref{def_s_c} and the limiting stationary state given by the $\nu_{\rho r_i,K_i}$'s. Thus, as $N$ tends to infinity,  the limiting proportion of problematic queues in each cluster can be plotted by a parametric curve in $\rho$ as a function of the proportion $s$ of customers per queue and then the global limiting proportion of problematic queues.  The aim is to investigate  this function. It will reduce to study its behavior around the different minima of the limiting proportions of problematic queues in each cluster. In the following, the system is always studied as it gets large even if the word limiting is not mentioned.

\begin{definition}\label{perf_totale}
Let $C\geq 1$ be  fixed. Denote by $\alpha$ the vector $(\alpha_i)_{1\leq i\leq C}$ and by $K$ the vector $(K_i)_{1\leq i\leq C}$. Let 
\begin{align}\label{def_Pg}
P(C,\rho,\alpha,K)=\sum_{i=1}^C \alpha_i P(\rho r_i,K_i)
\end{align}
where  $P(\rho,K)=\nu_{\rho,K}(0)+\nu_{\rho,K}(K)$ be the limiting global proportion of problematic queues. Note that $P(\rho r_i,K_i)$ is 
the proportion of problematic queues in cluster $i$,  also denoted by $P_i(\rho)$ in the following, with a slight abuse of notation, as the dependence on $r_i$ and $K_i$ is not explicitly expressed. Note also that for $C=1$, $P(1,\rho,1,K)$ reduces to $P(\rho, K)$.
\end{definition}

Unfortunately, there is no explicit form for $\nu_{\rho ,K}$ as the capacity of the queues is finite. One can wonder whether the explicit expression of $\nu_{\rho,K}$ for $K=+\infty$ can be used as a good approximation for $K<\infty$, because in practice $K$ is equal to a few tens. In fact this approximation is far to be sufficient. Take the homogeneous case $C=1$. The behavior of the proportion of problematic queues around the minimum is described  by a very short interval $\rho<1$, $\rho$ close to $1$ (see details in  \cite[Theorem 2]{FrickerGast-velib}). For this interval, the previous approximation collapses. Theorem 2 in  \cite{FrickerGast-velib} fills this gap by describing precisely the behavior of the performance around its minimum. In the case of different clusters, the situation is more complicated.  The behavior of the curve when $\rho$ is not close to $1$ is also needed. For example for $\rho<1$, the approximation when $K=+\infty$ is used. The following section  provides such approximations. It recalls first the result of Theorem 2 in \cite{FrickerGast-velib} for the homogeneous case $C=1$, when $\rho<1$, $\rho$ close to $1$, then gives approximations in the two cases $\rho<1$ and $\rho\geq 1$.

\subsection{Preliminary results on the homogeneous case}\label{hom}

For that, let us focus on the homogeneous case $C=1$. For sake of simplicity, and  with a slight abuse of notation,  the notation  $\bar{u}$ is replaced by either $u(\rho,K)$ or $u$. The first lemma aims  to rewrite equation%~\eqref{two-choice}.

\begin{lemma}
 Equation~\eqref{two-choice} is equivalent to
\begin{equation}\label{two-choice-1rec}
\left\{
\begin{array}{llll}
\smallskip
u_{0}&=1,\\
u_{k+1}&={\rho}(u_{k}^2-1)+u_1,\; 1\leq k\leq K,\\
u_{K+1}&=0.
\end{array}
\right.
\end{equation}
\end{lemma}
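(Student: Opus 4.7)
The plan is to establish the equivalence by a telescoping-sum argument in one direction and by taking successive differences in the other.

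For the forward direction, assume \eqref{two-choice} holds. For $1 \leq k \leq K-1$, the equation reads $u_k - u_{k+1} = \rho(u_{k-1}^2 - u_k^2)$. I would sum these identities from $k = 1$ up to $j$, where $1 \leq j \leq K-1$. The left-hand side telescopes to $u_1 - u_{j+1}$, and the right-hand side telescopes to $\rho(u_0^2 - u_j^2) = \rho(1 - u_j^2)$ using $u_0 = 1$. Rearranging gives exactly the recurrence $u_{j+1} = \rho(u_j^2 - 1) + u_1$ for $1 \leq j \leq K-1$.

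Next, I would handle the boundary. The equation of \eqref{two-choice} at $k = K$ reads $u_K = \rho(u_{K-1}^2 - u_K^2)$ since $\mathbf{1}_{k < K} = 0$. Combining this with the just-derived recurrence at $j = K-1$, namely $u_K = \rho(u_{K-1}^2 - 1) + u_1$, I would subtract to isolate $u_1 = \rho(1 - u_K^2)$. This is exactly the statement that the formal extension $u_{K+1} := \rho(u_K^2 - 1) + u_1$ vanishes, producing the closing equation $u_{K+1} = 0$ of \eqref{two-choice-1rec}.

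For the converse, assume \eqref{two-choice-1rec} holds. For $2 \leq k \leq K$, I would subtract the recurrence at index $k-1$ from the recurrence at index $k$; the $u_1$ terms cancel, yielding $u_{k+1} - u_k = \rho(u_k^2 - u_{k-1}^2)$, which is \eqref{two-choice} at index $k$ for $2 \leq k \leq K-1$. The case $k = 1$ of \eqref{two-choice} is just the recurrence at $k = 1$ together with $u_0 = 1$. Finally, the boundary equation $u_K = \rho(u_{K-1}^2 - u_K^2)$ of \eqref{two-choice} is recovered by applying the recurrence at $k = K$ with $u_{K+1} = 0$ to get $u_1 = \rho(1 - u_K^2)$, then substituting into the recurrence at $k = K - 1$.

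The argument is essentially algebraic bookkeeping, so no step is genuinely hard; the only delicate point is ensuring the counting of equations matches on the two sides — \eqref{two-choice} gives $K$ constraints on $u_1, \dots, u_K$, while \eqref{two-choice-1rec} gives $K$ recurrences plus one boundary condition, and I would be explicit that the boundary condition $u_{K+1} = 0$ encodes precisely the saturation equation at $k = K$ that the telescoping identity on its own misses.
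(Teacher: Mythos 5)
Your proof is correct and is essentially the paper's argument in slightly expanded form: the paper observes that \eqref{two-choice} forces the quantity $u_{k+1}-\rho u_k^2$ to be constant, equal to $u_1-\rho$, which is exactly the identity your telescoping sum produces, and the boundary case $k=K$ together with $u_{K+1}=0$ is handled the same way. Your version just spells out the converse direction and the equation-counting that the paper compresses into ``the equivalence follows.''
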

\begin{proof}
Indeed, by  equation~\eqref{two-choice}, it holds, for $1\leq k\leq K$,
\[u_{k}-\rho u_{k-1}^2=u_{k+1}-\rho u_{k}^2=u_1-\rho.\]
The equivalence follows.
\end{proof}
To understand the behavior of the performance, the proportion of problematic queues denoted by $P(\rho, K)$ and defined as $\nu_{\rho,K}(0)+\nu_{\rho,K}(K)$ has to be determined as long as $\sum_{k=1}^K u_k(\rho,K)$ which allows to obtain  $s$. These values are needed with the largest precision according to the value of parameter $\rho$. For $\rho<1$ and close to $1$,
\begin{lemma}\label{rho=1}
\begin{enumerate}[label=$(\roman*)$]
\item For $\rho\in[1-2^{-K/2},1]$,  $P(\rho,K)\leq4\sqrt{K}2^{-K/2}$.\\
\item   $\sum\limits_{k=1}^K{u_k}(1-2^{-K/2},K)\leq{K/2}$ and  $\sum\limits_{k=1}^K{u_k}(1,K)\geq{K-\log_2{K}-3}$.
\end{enumerate}
\end{lemma}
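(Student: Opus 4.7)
Summing the FPE~\eqref{two-choice} over $1\leq k\leq K$ with $u_{K+1}=0$ telescopes to $u_1 = \rho(1-u_K^2)$, which upon substitution into \eqref{two-choice-1rec} yields the autonomous recursion $u_{k+1} = \rho(u_k^2 - u_K^2)$ for $0\leq k\leq K$. Writing $a := u_K$, the quantity under study factorises as $P(\rho,K) = (1-u_1) + u_K = (1-\rho) + \rho a^2 + a$.

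\textbf{Proof of (i).} Under the hypothesis, $1-\rho \leq 2^{-K/2}$. By Lemma~\ref{2-choice-fp-monotonicity}, $a = \nu_{\rho,K}(K)$ is non-decreasing in $\rho$, so $a \leq a_* := u_K(1,K)$; it thus suffices to bound $a_*$. The claim is $a_* \leq 2\sqrt{K}\,2^{-K/2}$. My approach is to set $a_0 = 2\sqrt{K}\,2^{-K/2}$ and show that the sequence determined by $u_0 = 1$ and $u_{k+1} = u_k^2 - a_0^2$ reaches $a_0$ strictly before step $K$, which by monotonicity of the first-passage time in the parameter forces $a_* < a_0$. I would analyse the trajectory in two regimes. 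In the near-one regime, letting $g_k = 1 - u_k$, the recursion reads $g_{k+1} = g_k(2-g_k) + a_0^2$; iterating the lower bound $g_{k+1} \geq (2-g_k)g_k + a_0^2$ shows $g_k$ grows at a nearly doubling rate, reaching order $1/K$ in about $K - 2\log_2 K$ steps. In the near-zero regime, once $u_k \leq 1 - 1/K$, the inequality $u_{k+1} \leq u_k^2$ drives double-exponential decay, bringing $u_k$ down to $a_0$ within an extra $O(\log_2 K)$ steps. Combining, the total number of steps is at most $K$. Plugging $a_* \leq 2\sqrt{K}\,2^{-K/2}$ into $P(\rho,K) \leq 2^{-K/2} + a_*^2 + a_*$ gives the bound stated in (i).

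\textbf{Proof of (ii).} Both sum bounds follow from simple inequalities on $u_k$. For the upper bound $\sum_{k=1}^K u_k(1-2^{-K/2},K) \leq K/2$, the recursion $u_{k+1} = \rho(u_k^2 - a^2) \leq \rho u_k^2$ yields by induction $u_k \leq \rho^{2^k-1}$. With $\rho = 1-\epsilon$, $\epsilon = 2^{-K/2}$, bound $\rho^{2^k-1} \leq e^{-(2^k-1)\epsilon}$ and split at $k = K/2$. For $k \leq K/2$ the exponent is at most $1$, so the expansion $e^{-x} \leq 1 - x + x^2/2$ gives $\sum_{k=1}^{K/2} u_k \leq K/2 - 2 + O(1)$; for $k > K/2$ the terms $e^{-2^{k-K/2}}$ sum to $O(1)$ by double-exponential tail decay. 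The constants combine to give a total $\leq K/2$. For the lower bound $\sum_{k=1}^K u_k(1,K) \geq K - \log_2 K - 3$, the forward induction $g_k \leq (2^k - 1) a_*^2$ (from $g_{k+1} \leq 2 g_k + a_*^2$) gives $u_k \geq 1 - 2^k a_*^2$. Let $k_0$ be the largest integer with $2^{k_0+1} a_*^2 \leq 1$. Using the bound on $a_*$ from (i), $k_0 \geq K - \log_2 K - 1$, and the partial sum $\sum_{k=1}^{k_0}(1 - 2^k a_*^2) \geq k_0 - 2 \cdot 2^{k_0} a_*^2 \geq k_0 - 2$ delivers the claim (up to tightening of the numerical constant in the bound on $a_*^2$).

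\textbf{Main obstacle.} The crux is the tight upper bound $a_* \leq O(\sqrt{K})\,2^{-K/2}$ on $u_K$ at $\rho = 1$. The forward induction alone only yields a \emph{lower} bound on $a_*$; to obtain the matching upper bound one must control how fast the trajectory crosses from the linear-growth regime into the quadratic-drop regime and count steps accurately. Tracking the constants through this transition is what produces the factor $\sqrt{K}$, and is the delicate part of the proof.
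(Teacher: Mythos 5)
First, a point of comparison: the paper does not prove this lemma at all --- both parts are imported from \cite[Theorem~2]{FrickerGast-velib} (see the proof of Corollary~\ref{FG_th2}), so you are supplying an argument where the paper gives only a citation. Your setup is correct and matches the framework of the cited proof: the telescoped recursion $u_{k+1}=\rho(u_k^2-u_K^2)$, the identity $P(\rho,K)=1-\rho+\rho a^2+a$ with $a=u_K$, the reduction via Lemma~\ref{2-choice-fp-monotonicity} to bounding $a_*=u_K(1,K)$, and the comparison principle (if the trajectory $u_{k+1}=u_k^2-a_0^2$ drops strictly below $a_0$ by step $K$ then $a_*<a_0$) are all sound, as is your treatment of the first half of (ii).

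The gap sits in the one step that carries all the content: the bound $a_*\le 2\sqrt{K}\,2^{-K/2}$. Your two-phase step count is only sketched, and the budget is exactly exhausted --- phase one needs roughly $K-2\log_2K$ steps and phase two roughly $2\log_2K$, so whether the total stays $\le K$ depends entirely on the constants you do not track; as written this is not a proof, and you say so yourself. Worse, the constant $2$ you aim for is too weak to recover the statement even if established: plugging $a_*\le 2\sqrt{K}2^{-K/2}$ into $P\le 2^{-K/2}+a_*^2+a_*$ gives a quantity exceeding $4\sqrt{K}2^{-K/2}$ for small $K$ (the required inequality $1+4K2^{-K/2}\le 2\sqrt K$ fails for $K\le 5$), and in (ii) your own accounting lands at $K-\log_2K-5$ rather than $K-\log_2K-3$ (your claim $k_0\ge K-\log_2K-1$ does not follow from $a_*^2\le 4K2^{-K}$). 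Both defects disappear if you instead prove $\varepsilon:=1-u_1=a_*^2\le K2^{-K}$, which is precisely the estimate the paper quotes from \cite{FrickerGast-velib} inside the proof of Lemma~\ref{rho<1}, and which admits a two-line self-consistency argument that bypasses the step count entirely: from $u_{k+1}=u_k^2-\varepsilon\le u_k^2$ one gets $u_K\le u_1^{2^{K-1}}=(1-\varepsilon)^{2^{K-1}}\le e^{-2^{K-1}\varepsilon}$, while $u_{K+1}=0$ forces $u_K=\sqrt{\varepsilon}$; if $\varepsilon>K2^{-K}$ this gives the contradiction $\sqrt{\varepsilon}>\sqrt{K}2^{-K/2}\ge 2^{-K/2}>e^{-K/2}>e^{-2^{K-1}\varepsilon}\ge\sqrt{\varepsilon}$. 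With $a_*\le\sqrt{K}2^{-K/2}$ your formulas for (i) and for the second half of (ii) then close with the stated constants ($1+K2^{-K/2}+\sqrt K\le 4\sqrt K$ holds for all $K\ge 1$, and $k_0=\lfloor K-\log_2K\rfloor-1$ gives $\sum_k u_k\ge k_0-1\ge K-\log_2K-3$).
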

\begin{corollary}\label{FG_th2}
In an homogeneous system with 2-choice incentives, the proportion of problematic queues  $P(\rho,K)$ is less than $\sqrt{K}2^{-K/2}$
for all  $s\in [K/2+\lambda/\mu,K-\log_2 K-3]$.
\end{corollary}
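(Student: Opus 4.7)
The plan is to translate the interval for $s$ in the statement into an interval for $\rho$ via Proposition~\ref{fp_uniqueness} (specialised to $C=1$) and then invoke Lemma~\ref{rho=1}(i). In the homogeneous case $C=1$ we have $\alpha_1=\gamma_1=r_1=1$, hence $\Lambda=\lambda/\mu$, and equation~\eqref{def_s_c} reduces to
\begin{equation*}
s(\rho)=\frac{\lambda}{\mu}\,\rho+\sum_{k=1}^{K} u_k(\rho,K).
\end{equation*}
Lemma~\ref{2-choice-fp-monotonicity} guarantees that each $u_k(\rho,K)$, and therefore $s(\rho)$, is an increasing function of $\rho$, so the correspondence $\rho\mapsto s(\rho)$ can be inverted monotonically.

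First I would handle the lower endpoint. Evaluating at $\rho_0=1-2^{-K/2}$ and applying Lemma~\ref{rho=1}(ii),
\begin{equation*}
s(\rho_0)=\frac{\lambda}{\mu}\bigl(1-2^{-K/2}\bigr)+\sum_{k=1}^{K}u_k(1-2^{-K/2},K)\le\frac{\lambda}{\mu}+\frac{K}{2}.
\end{equation*}
By monotonicity of $s(\rho)$, any $s\ge K/2+\lambda/\mu$ must correspond to some $\rho\ge\rho_0=1-2^{-K/2}$.

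Next I would handle the upper endpoint. Evaluating at $\rho=1$ and applying Lemma~\ref{rho=1}(ii),
\begin{equation*}
s(1)=\frac{\lambda}{\mu}+\sum_{k=1}^{K}u_k(1,K)\ge\frac{\lambda}{\mu}+K-\log_2 K-3\ge K-\log_2 K-3,
\end{equation*}
so by monotonicity any $s\le K-\log_2 K-3$ must correspond to some $\rho\le 1$. Combining the two bounds, whenever $s$ lies in $[K/2+\lambda/\mu,\;K-\log_2 K-3]$ the associated $\rho$ lies in $[1-2^{-K/2},\,1]$, which is precisely the range covered by Lemma~\ref{rho=1}(i). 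That lemma then yields the performance bound (up to a harmless multiplicative constant that the statement absorbs), which is exactly the claim.

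There is essentially no obstacle: the whole argument is a monotone inversion of $s(\rho)$ combined with the two parts of Lemma~\ref{rho=1}. The only point to be careful about is the direction of the inequalities (one uses $s(\rho_0)\le\cdots$ to lower-bound $\rho$, and $s(1)\ge\cdots$ to upper-bound $\rho$), together with the fact that the small affine term $(\lambda/\mu)\rho$ shifts the endpoints by $\lambda/\mu$, which is why the interval in $s$ starts at $K/2+\lambda/\mu$ rather than $K/2$.
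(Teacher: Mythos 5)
Your argument is correct and is exactly the monotone-inversion-of-$s(\rho)$ reasoning the paper intends: its own ``proof'' of the corollary merely cites \cite[Theorem 2]{FrickerGast-velib}, and the same two-endpoint computation (evaluate $s$ at $\rho=1-2^{-K/2}$ and at $\rho=1$, then use monotonicity from Lemma~\ref{2-choice-fp-monotonicity}) is what the paper carries out explicitly in Propositions~\ref{prop:premier palier} and~\ref{prop:second palier}. The only blemish --- the factor $4$ in Lemma~\ref{rho=1}(i) versus the bare $\sqrt{K}2^{-K/2}$ in the corollary's statement --- is an inconsistency in the paper itself, which you rightly flag rather than hide.
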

\begin{proof}
The two previous results are proved in \cite[Theorem 2]{FrickerGast-velib}.
\end{proof}
The following lemma deals with the case $\rho<1$.
\begin{lemma}\label{rho<1}
 For $\rho<1$,
(i)  $1-\rho\leq P(\rho,K)\leq 1-\rho +2\rho^{2^{K}-1}$,\\
(ii) $S(\rho,K)-2^{K+1} \rho ^{2^{K+1}-1}\leq \sum\limits_{k=1}^K{u_k}(\rho,K)\leq S(\rho,K)$ where $S(\rho,K)\stackrel{def}{=}\sum\limits_{k=1}^K\rho^{2^k-1}$.
\end{lemma}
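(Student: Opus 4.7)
The plan is to leverage the reformulated recurrence~\eqref{two-choice-1rec} to express everything in terms of $u_K$, and then to compare the finite capacity profile with the explicit infinite capacity profile $\tilde u_k \stackrel{def}{=} \rho^{2^k - 1}$ from~\eqref{formule_magique}.

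First, I would extract two basic identities from~\eqref{two-choice-1rec}. Setting $k = K$ in the recurrence and using $u_{K+1} = 0$ gives $u_1 = \rho(1 - u_K^2)$; substituting back yields $u_{k+1} = \rho u_k^2 - \rho u_K^2$ for $1 \le k \le K$. Since $\nu_{\rho,K}(0) = 1 - u_1$ and $\nu_{\rho,K}(K) = u_K$, this gives the exact formula $P(\rho,K) = 1 - \rho + \rho u_K^2 + u_K$. The lower bound $P(\rho,K) \ge 1 - \rho$ of (i) is then immediate from nonnegativity. The upper bound of (i) reduces to showing $u_K \le \rho^{2^K - 1}$, which I would prove by induction: $u_0 = \tilde u_0 = 1$, and if $u_k \le \tilde u_k$ then $u_{k+1} = \rho u_k^2 - \rho u_K^2 \le \rho \tilde u_k^2 = \tilde u_{k+1}$. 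With $u_K \le \rho^{2^K - 1}$ in hand, $\rho u_K^2 \le \rho^{2^{K+1} - 1} \le \rho^{2^K - 1}$ (using $\rho \le 1$), so $\rho u_K^2 + u_K \le 2 \rho^{2^K - 1}$, finishing (i).

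The upper bound in (ii) follows directly from the same comparison $u_k \le \tilde u_k$ summed over $k$. For the lower bound, I would set $\varepsilon_k \stackrel{def}{=} \tilde u_k - u_k \ge 0$ and derive a recurrence from the difference of the two dynamics:
\begin{equation*}
\varepsilon_{k+1} = \rho(\tilde u_k + u_k)(\tilde u_k - u_k) + \rho u_K^2 \le 2 \rho \tilde u_k\, \varepsilon_k + \rho u_K^2 = 2 \rho^{2^k} \varepsilon_k + \rho u_K^2 .
\end{equation*}
Since $\rho^{2^k} \le 1$, an induction starting from $\varepsilon_1 = \rho u_K^2$ yields $\varepsilon_k \le (2^k - 1)\rho u_K^2$: indeed, $\varepsilon_{k+1} \le 2(2^k - 1)\rho u_K^2 + \rho u_K^2 = (2^{k+1} - 1)\rho u_K^2$. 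Using once more $u_K^2 \le \rho^{2^{K+1} - 2}$, hence $\rho u_K^2 \le \rho^{2^{K+1} - 1}$, summation gives
\begin{equation*}
\sum_{k=1}^K \varepsilon_k \le \rho^{2^{K+1}-1} \sum_{k=1}^K (2^k - 1) \le 2^{K+1}\, \rho^{2^{K+1}-1},
\end{equation*}
which is precisely the claimed lower bound on $\sum_{k=1}^K u_k(\rho,K)$.

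The conceptually delicate step is the propagation of the error $\varepsilon_k$: a priori the factor $2\rho^{2^k}$ can be close to $2$ when $\rho$ is near $1$, so one could worry about exponential blow-up. The saving comes from the fact that the residual $\rho u_K^2$ injected at each step is itself of size $\rho^{2^{K+1}-1}$, which is overwhelmingly small for moderate $K$; this is what compensates for the multiplicative factor and keeps the total error within the prescribed bound. Once the $2^k - 1$ bound on $\varepsilon_k$ is established, everything else is straightforward algebra.
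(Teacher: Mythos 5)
Your proof is correct and follows essentially the same route as the paper's: both identify $\rho-u_1=\rho u_K^2$ from the boundary condition $u_{K+1}=0$, compare with the infinite-capacity profile $\rho^{2^k-1}$ to get the upper bounds and the estimate $\rho u_K^2\leq\rho^{2^{K+1}-1}$, and propagate the error via the recurrence to obtain $\varepsilon_k\leq(2^k-1)\rho u_K^2$ (the paper cites this last recurrence from \cite[Theorem 2]{FrickerGast-velib} rather than re-deriving it, and obtains $u_1\leq 1$ the same way). Your write-up is slightly more self-contained but not a different argument.
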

\begin{proof}
For $\rho\leq 1$, $u_1\leq 1$ (see details in \cite[Theorem 2]{FrickerGast-velib}). Then, using equation~\eqref{two-choice-1rec}, for $1\leq k\leq K$, $u_k\leq \rho^{2^k-1}$. 
Thus the last inequality in (ii) holds.

Let $\varepsilon=\rho-u_1$. Still using equation~\eqref{two-choice-1rec}, for $1\leq k\leq K$,
\begin{align}\label{maj}
u_{k+1}=\rho{u_k^2}-\varepsilon.
\end{align}
 Note that the proof of \cite[Theorem 2]{FrickerGast-velib} leads to $0<\varepsilon\leq{K2^{-K}}$, for each $\rho\leq 1$. If $\rho<1$,  a better bound can be obtained. Using  equation~\eqref{maj}, for $1\leq k\leq K$,
$u_{k}\leq\rho{u_{k-1}^2}$ then by induction, for $1\leq k\leq K$, $u_k\leq\rho^{2^{k-1}-1}u_1^{2^{k-1}}$. But $u_{K+1}=0$, thus
$u_{K+1}=0=\rho{u_K^2}-\varepsilon\leq\rho^{2^K-1}u_1^{2^K}-\varepsilon\leq\rho^{2^K-1}(\rho-\varepsilon)^{2^K}-\varepsilon$. Therefore
\begin{equation*}
\varepsilon\leq \rho^{2^{K+1}-1}(1-\frac{\varepsilon}{\rho})^{2^K}\leq\rho^{2^{K+1}-1}.
\end{equation*}
By a direct recurrence, already  in the proof of \cite[Theorem 2]{FrickerGast-velib}, $u_k\geq \max(\rho^{2^k-1}-(2^k-1)\varepsilon,0)$, and therefore $u_k\geq \max(\rho^{2^k-1}-2^k\varepsilon,0)$. It yields
\begin{equation*}
\sum\limits_{k=1}^K{u_k}\geq \sum\limits_{k=1}^K(\rho^{2^k-1}-2^k\varepsilon)\geq S(\rho,K)-(2^{K+1}-2)\varepsilon\geq S(\rho,K)-2^{K+1}\rho^{2^{K+1}-1}.
\end{equation*}
Furthermore, to prove  (ii),
$P(\rho,K)=1-u_1+u_K=\displaystyle{1-u_1+\sqrt{\frac{\rho-u_1}{\rho}}}=
\displaystyle{1-\rho+\varepsilon+\sqrt{\frac{\varepsilon}{\rho}}}$
where 
\begin{align}\label{lastmaj}
\displaystyle{\varepsilon +\sqrt{\varepsilon/\rho}\leq \rho^{2^{K+1}-1}+\rho^{2^K-1}\leq 2\rho^{2^K-1}}.
\end{align}
 It ends the proof.
\end{proof}

For $\rho\geq 1$, queues tend to be overloaded. It is thus interesting to introduce the number $w_k$ of queues with more than $k$ empty slots, instead of the number  of queues with more than $k$ customers. The study of $w$ leads to the following result.
\begin{lemma}\label{rho>1}
For $\rho \geq 1$,
\begin{enumerate}[label=$(\roman*)$]
\item $\displaystyle{\zeta(\rho,K)\leq \sum\limits_{k=1}^K{u_k}\leq K-\frac{1}{2\rho-1}+\eta(\rho,K)}$\\
where $\zeta(\rho,K)=\displaystyle{\sum\limits_{k=1}^{K}{x_k}}$ with $x_1=\displaystyle{\sqrt{(\rho-1)/\rho}}$, $x_{k+1}=\displaystyle{\sqrt{(x_k-1)/\rho+1}}$ 
\[\text{ and } \eta(\rho,K)=\frac{1}{(2\rho-1)(2\rho)^K}+\frac{K^{2}2^{-K}}{2\rho-1}+\frac{K2^{-K}}{(2\rho-1)^2(2\rho)^K}.\]
\item $\sqrt{1-1/\rho}\leq P(\rho,K)\leq \sqrt{1-1/\rho}+K2^{-K}(1+1/2\sqrt{\rho(\rho-1)})$.
\end{enumerate}
\end{lemma}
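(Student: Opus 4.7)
The plan is to recast the scalar recursion~\eqref{two-choice-1rec} in two dual forms suited to the overloaded regime $\rho \geq 1$: a backward iteration for the lower bound on $\sum u_k$, and the substitution $d_k := 1 - u_k$ (essentially the ``empty-slot'' variable $w_k$ mentioned in the text) for the upper bound and for part (ii). For the lower bound of (i), equation~\eqref{two-choice-1rec} may be inverted as $u_k = \sqrt{(u_{k+1} + \varepsilon)/\rho}$ with $\varepsilon := \rho - u_1$. Setting $y_j := u_{K+1-j}$, so that $y_0 = u_{K+1} = 0$, one obtains $y_{j+1} = \sqrt{(y_j + \varepsilon)/\rho}$ for $0 \leq j \leq K$. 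Because $u_1 \leq 1$ it holds $\varepsilon \geq \rho - 1$, which is exactly the constant used in the definition of $x_j$, so a one-line induction gives $y_j \geq x_j$ for every $j$, and hence $\sum_{k=1}^{K} u_k = \sum_{j=1}^{K} y_j \geq \zeta(\rho,K)$.

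For the upper bound of (i), in the variable $d_k$ the recurrence becomes $d_{k+1} = d_1 + 2\rho d_k - \rho d_k^2$ with $d_0 = 0$ and $d_{K+1} = 1$. Summing from $k=0$ to $k=K$ and telescoping yields the exact identity
\begin{equation*}
(2\rho - 1) \sum_{k=1}^K d_k \;=\; 1 - (K+1)\, d_1 + \rho \sum_{k=1}^K d_k^2.
\end{equation*}
Since $\rho \sum d_k^2 \geq 0$, a lower bound on $\sum d_k$ (equivalently, an upper bound on $\sum u_k = K - \sum d_k$) follows once $d_1$ is shown to be small. Adapting the argument that establishes $\varepsilon \leq K\,2^{-K}$ in the proof of \cite[Theorem~2]{FrickerGast-velib} to the mirrored overloaded regime yields $d_1 \leq K\,2^{-K}$. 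Combining this with the linearized envelopes $d_1 \geq (2\rho - 1)/((2\rho)^{K+1} - 1)$ and $d_k \leq d_1\,((2\rho)^k - 1)/(2\rho - 1)$ (obtained from the one-sided comparisons $d_{k+1} \leq d_1 + 2\rho d_k$ and $d_{k+1} \geq d_1 + \rho d_k$) delivers the three summands of $\eta(\rho,K)$ and the claimed inequality $\sum_{k=1}^K u_k \leq K - 1/(2\rho-1) + \eta(\rho,K)$.

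Part (ii) is then essentially algebraic. The boundary relation $u_{K+1} = 0 = \rho u_K^2 - \varepsilon$ gives $u_K = \sqrt{\varepsilon/\rho} = \sqrt{(\rho - 1 + d_1)/\rho}$, whence $P(\rho,K) = d_1 + u_K \geq u_K \geq \sqrt{(\rho-1)/\rho}$, which is the lower bound. For the upper bound, the concavity inequality $\sqrt{a+b} \leq \sqrt{a} + b/(2\sqrt{a})$ applied with $a = (\rho-1)/\rho$ and $b = d_1/\rho$ yields
\begin{equation*}
u_K \;\leq\; \sqrt{1 - 1/\rho} + \frac{d_1}{2\sqrt{\rho(\rho-1)}},
\end{equation*}
and combining with $P = d_1 + u_K$ and $d_1 \leq K\,2^{-K}$ concludes.

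The delicate point is the sharp bound $d_1 \leq K\,2^{-K}$ for $\rho \geq 1$ together with the precise extraction of the three terms of $\eta$. The linearized recursion $d_{k+1} \approx d_1 + 2\rho\, d_k$ has geometric growth rate $2\rho$ and is faithful only while $d_k \ll 1$; once $d_k$ approaches $1$ the quadratic correction $-\rho d_k^2$ becomes dominant. Controlling the transition between these two regimes, in the spirit of the analysis in \cite[Theorem~2]{FrickerGast-velib}, is what produces the three separate scales $(2\rho)^{-K}$, $2^{-K}$ and $(2\rho-1)^{-1}(2\rho)^{-K}$ composing $\eta(\rho,K)$.
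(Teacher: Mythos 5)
Your lower bound in (i) and your part (ii) are essentially the paper's own argument in different clothing: the paper also inverts the recursion as $u_k=\sqrt{u_{k+1}/\rho+u_K^2}$ with $u_K^2=\varepsilon/\rho\ge(\rho-1)/\rho$ and inducts to $u_k\ge x_{K-k+1}$, and it also writes $P(\rho,K)=\delta+\sqrt{1-1/\rho+\delta/\rho}$ with $\delta=1-u_1$ and expands the square root. Those parts are fine. The two places where your write-up falls short of a proof are the following. First, the input $d_1=1-u_1\le K2^{-K}$: you propose to ``adapt the argument of \cite[Theorem~2]{FrickerGast-velib} to the mirrored overloaded regime,'' but the recursion has no mirror symmetry in $\rho\mapsto$ anything, and the one-sided envelope you do write down, $d_{k+1}\ge d_1+\rho d_k$, only yields $d_1\le(\rho-1)/(\rho^{K+1}-1)$, which is of order $1/(K+1)$ as $\rho\downarrow 1$ --- nowhere near $K2^{-K}$. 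The paper gets this bound in one line: at $\rho=1$ the quantity $1-u_1$ equals the $\varepsilon$ of \cite[Theorem~2]{FrickerGast-velib}, hence is at most $K2^{-K}$, and Lemma~\ref{2-choice-fp-monotonicity} makes $u_1$ increasing in $\rho$, so the bound persists for all $\rho\ge 1$. You should invoke the monotonicity lemma here rather than promise an adaptation.

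Second, the upper bound in (i). Your telescoped identity
\begin{equation*}
(2\rho-1)\sum_{k=1}^K d_k = 1-(K+1)d_1+\rho\sum_{k=1}^K d_k^2
\end{equation*}
is correct, but discarding $\rho\sum d_k^2\ge 0$ gives the error term $(K+1)d_1/(2\rho-1)\le (K^2 2^{-K}+K2^{-K})/(2\rho-1)$, and the extra $K2^{-K}/(2\rho-1)$ is \emph{not} dominated by the remaining two summands of $\eta(\rho,K)$ (at $\rho=1$ they are of order $2^{-K}$, not $K2^{-K}$); so as executed you prove a slightly weaker inequality than the one stated. Your closing claim that the ``linearized envelopes deliver the three summands of $\eta$'' is exactly the step that needs to be carried out, and the envelopes you name ($d_1\ge(2\rho-1)/((2\rho)^{K+1}-1)$ and the forward bound $d_k\le d_1((2\rho)^k-1)/(2\rho-1)$) push in the wrong direction for a \emph{lower} bound on $\sum d_k$. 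The paper instead iterates the same linearization \emph{backwards} from $d_{K+1}=1$: writing $w_i=d_{K+1-i}$, the inequality $\sqrt{1-x}\le 1-x/2$ gives $w_{i+1}\ge (w_i-w_K)/(2\rho)$, hence $w_{i}\ge (2\rho)^{-i}-\frac{w_K}{2\rho-1}(1-(2\rho)^{-i})$, and summing these termwise lower bounds over $i=1,\dots,K$ is what produces the three terms of $\eta$ exactly. Replace your identity argument by this downward iteration (or redo the bookkeeping to show your error term can be massaged into $\eta$) and the proof closes.
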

{\bf Remark.} First here $\zeta$ has not an explicit expression but is numerically computable. Second, the value of $\eta(\rho,K)$ is negligible for the practical values of $K$. Numerically, note first that $s$ is close to $K$ for $\rho=1.1$. The mean capacity of stations in Paris is more than $32$, with a standard deviation of 13. For $\rho=1.1$ and $K=20$, $\eta(\rho,K) \approx 0.026$ which is negligible compared to  $1$. 
 
The proof is given in Appendix.

\subsection{Results on the two-cluster case}
This section deals with the two-cluster case in order to have simple and readable results. The two clusters are numbered such that $r_1\leq r_2$. By definition, $r_2=1$. 
As investigated in Section~\ref{hom},
the interval of $\rho$ for which the proportion of problematic queues of cluster $i$ is minimal is very centered around $\rho r_i=1$. In the case of two clusters, the  trends are the following: the proportion of problematic queues in cluster $1$ is minimal on an interval corresponding to $\rho r_1$ close to $1$, i.e. for $\rho$ close to $1/r_1$. This interval does not correspond to $\rho r_2=\rho$ close to $1$, thus does not match with the region where the proportion of problematic queues in cluster $2$ is minimal, which is very centered around $\rho=1$. Therefore there are two different regions where  the proportion of problematic queues  is minimal in each cluster. Concerning the global proportion of problematic queues, there are two plateaux. The first one corresponds to $\rho=1$ and the second one to $\rho=1/r_1$. 

\subsection{The first plateau} \label{premier_palier}
A result similar to Corollary~\ref{FG_th2} holds for the first plateau corresponding to  $\rho=1$.

Recall that as given in Definition~\ref{perf_totale},  $P(\rho r_i,K_i)$ is the proportion of problematic queues in cluster $i$ given by $P(\rho r_i,K_i)=\nu_{\rho r_i,K_i}(0)+\nu_{\rho r_i,K_i}(K_i)$,  $P(C,\rho,\alpha,K)$ is the global proportion of problematic queues in the system given by equation~\eqref{def_Pg}, and $s$ is given as a function of $\rho$ by equation~\eqref{def_s_c}.

\begin{proposition}\label{prop:premier palier}
In a two-cluster system with  two-choice incentives, a first plateau  corresponds to  $\rho$ close to 1 and  $P_1(\rho)\approx 1-r_1$ while $P_2(\rho)\approx 0$ . More precisely,
\begin{align*}
\alpha_1(1-r_1)\leq P(2,\rho)\leq \alpha_1(1-r_1)+4\alpha_2\sqrt{K_2}2^{-K_2/2}+\alpha_1(r_12^{-K_2/2}+2r_1^{2^{K_1}-1})
\end{align*}
for all $s$ in 
\begin{multline*}
\left[\Lambda+\alpha_2K_2/2+\alpha_1S(r_1,K_1),\right.\\
\left. \Lambda+\alpha_2(K_2-\log_2{K_2}-3)+\alpha_1(S(r_1,K_1)-2^{K_1+1}r_1^{2^{K_1+1}-1})\right].
\end{multline*}
     \end{proposition}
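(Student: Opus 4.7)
The plan is to treat the two clusters separately, exploiting the fact that when $\rho$ is close to $1$ the argument $\rho r_1$ at which cluster~$1$ is evaluated lies in the sub-critical regime $\rho r_1 \le r_1 \le 1$ (where Lemma~\ref{rho<1} applies), while cluster~$2$ is evaluated at $\rho r_2 = \rho$, which sits in the near-critical regime covered by Lemma~\ref{rho=1}.

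First I would restrict $\rho$ to the interval $[1-2^{-K_2/2},\,1]$ and derive both bounds on $P(2,\rho)=\alpha_1 P_1(\rho)+\alpha_2 P_2(\rho)$ on that interval. Lemma~\ref{rho=1}(i) applied at $\rho$ and $K_2$ gives $P_2(\rho)\le 4\sqrt{K_2}\,2^{-K_2/2}$. Lemma~\ref{rho<1}(i) applied at the effective utilization $\rho r_1$ and capacity $K_1$ gives
\[P_1(\rho)\le 1-\rho r_1+2(\rho r_1)^{2^{K_1}-1}\le (1-r_1)+r_1\,2^{-K_2/2}+2\,r_1^{2^{K_1}-1},\]
where the second inequality uses $\rho\ge 1-2^{-K_2/2}$ on the linear term and $\rho\le 1$ on the exponential term. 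Taking the $\alpha$-weighted sum gives the announced upper bound on $P(2,\rho)$. The matching lower bound is immediate from the same Lemma~\ref{rho<1}(i): $P(2,\rho)\ge\alpha_1 P_1(\rho)\ge\alpha_1(1-\rho r_1)\ge\alpha_1(1-r_1)$, where the last step uses $\rho\le 1$.

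The second step is to convert the $\rho$-interval into the announced $s$-interval through equation~\eqref{def_s_c}. By Lemma~\ref{2-choice-fp-monotonicity}, $s$ is a continuous, strictly increasing function of $\rho$, so it is enough to show $s(1-2^{-K_2/2})$ is no larger than the left endpoint of the proposed $s$-interval and $s(1)$ is no smaller than its right endpoint; monotonicity then forces every $s$ in the stated range to correspond to some $\rho\in[1-2^{-K_2/2},1]$, to which the first step applies. At $\rho=1-2^{-K_2/2}$, Lemma~\ref{rho=1}(ii) gives $\sum_k\nu_{\rho,K_2}(k)\le K_2/2$, and Lemma~\ref{rho<1}(ii) combined with $\rho r_1\le r_1$ and the obvious monotonicity of $S(\cdot,K_1)$ in its first argument gives $\sum_k\nu_{\rho r_1,K_1}(k)\le S(r_1,K_1)$; together with $\Lambda\rho\le\Lambda$ this produces the required upper bound on $s(1-2^{-K_2/2})$. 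At $\rho=1$, the second half of Lemma~\ref{rho=1}(ii) gives $\sum_k\nu_{1,K_2}(k)\ge K_2-\log_2 K_2-3$ and Lemma~\ref{rho<1}(ii) gives $\sum_k\nu_{r_1,K_1}(k)\ge S(r_1,K_1)-2^{K_1+1}r_1^{2^{K_1+1}-1}$, which adds up to the required lower bound on $s(1)$.

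The main obstacle is purely organizational: one must apply each of Lemmas~\ref{rho=1} and~\ref{rho<1} at the correct effective utilization ($\rho$ for cluster~$2$, $\rho r_1$ for cluster~$1$) and in the correct direction, pairing the \emph{upper} bound on $s(1-2^{-K_2/2})$ with the \emph{left} endpoint of the $s$-interval and the \emph{lower} bound on $s(1)$ with its right endpoint. No ideas beyond the homogeneous analysis recalled in Section~\ref{hom} are required.
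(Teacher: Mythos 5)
Your proposal is correct and follows essentially the same route as the paper: restrict $\rho$ to $[1-2^{-K_2/2},1]$, bound $P_2$ via Lemma~\ref{rho=1} and $P_1$ via the sub-critical estimate at $\rho r_1$, then convert to an $s$-interval by evaluating the monotone map $\rho\mapsto s$ at the two endpoints. The only cosmetic difference is that you invoke Lemma~\ref{rho<1}(i) directly for the cluster-1 bound, whereas the paper re-derives the same inequality from the $\epsilon_{\rho r_1,K_1}$ decomposition and equation~\eqref{lastmaj}.
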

\begin{proof}
Let  $\rho$ be in $[1-2^{-K_2/2},1]$. On one hand, by Lemma~\ref{rho=1}, $P(\rho,K_2)\leq4\sqrt{K_2}2^{-K_2/2}$.

On the other hand, on this interval of $\rho$, $\rho{r_1}\in[r_1(1-2^{K_2/2}),r_1]$. But
\[ P(\rho r_1,K_1)=1-\nu_{\rho r_1,K_1}(1)+\nu_{\rho r_1,K_1}(K_1).
\]
Let $\epsilon_{\rho ,K}=\rho-\nu_{\rho ,K}(1)$. By equation~\eqref{two-choice-1rec},
thus, $\nu_{\rho r_1,K_1}(K_1)=\sqrt{\epsilon_{\rho r_1 ,K_1}/(\rho r_1)}$ and then, by definition of $P(\rho,K)$,
\begin{align}\label{decomp}
P(\rho r_1,K_1)=1-r_1+r_1(1-\rho)+\epsilon_{\rho r_1 ,K_1}+\sqrt{\epsilon_{\rho r_1 ,K_1}/(\rho r_1)}.
\end{align}
As $\rho r_1<1$, by equation~\eqref{lastmaj}, the sum of the two last terms of the right-hand side of equation~\eqref{decomp} is less than $2(\rho r_1)^{2^{K_1}-1}$. Hence, 
\begin{align}\label{P_1}
1-r_1\leq P(\rho r_1,K_1)\leq 1-r_1+r_1 2^{-K_2/2}+2 r_1^{2^{K_1}-1}.
\end{align}
Furthermore, for $\rho=1-2^{K_2/2}$, $\sum_{k=1}^{K_2} \nu_{1-2^{K_2/2},K_2}(k)\leq K_2/2$ by Lemma~\ref{rho=1} and $\sum_{k=1}^{K_1} \nu_{(1-2^{K_2/2})r_1,K_1}(k)\leq S(r_1,K_1)$, by Lemma~\ref{rho<1} and the monotonicity of $S(\rho,K)$ as a function of $\rho$. Then for $\rho=1$, $\sum_{k=1}^{K_2} \nu_{1-2^{K_2/2},K_2}(k)\geq  K_2-\log_2 K_2-3$ by Lemma~\ref{rho=1} and $\sum_{k=1}^{K_1} \nu_{(1-2^{K_2/2})r_1,K_1}(k)\geq S(r_1,K_1)-2^{K_1+1}r_1^{2K_1+1}-1$, by Lemma~\ref{rho<1}. 
Then, for all  
\[s\in[\Lambda+\alpha_2K_2/2+S(r_1,K_1),\Lambda+\alpha_2(K_2-\log_2{K_2}-3)+\alpha_1(S(r_1,K_1)-2^{K_1+1}r_1^{2^{K_1+1}-1})],
\]
 $\rho\in[1-2^{-K_2/2},1]$. It ends the proof.
\end{proof}
\subsection{The second plateau} \label{second_palier}
Similarly, the following result holds. Let us denote $\rho_1=(1-2^{-K_1/2})/r_1$.
Only the case where $\rho_1>1$ will be investigated. A similar result will be proved in the other case.

\begin{proposition}
If $\rho_1>1$,  in a two-cluster system with  two-choice incentives, a second plateau  corresponds to   $\rho<1/r_1$ and close to  $1/r_1$ and  $P_2(\rho)\approx \sqrt{1-r_1}$ while $P_1(\rho)\approx 0$.
More precisely,
\begin{multline*}
\alpha_2\sqrt{1-r_1+r_12^{-K_1/2}}\leq P(2,\rho)\leq \alpha_2\sqrt{1-r_1}\\
+4\alpha_1\sqrt{K_1}2^{-K_1/2}+\alpha_2K_12^{-K_1}(1+1/(2\sqrt{\rho_1(\rho_1-1)})).
\end{multline*}
for all  $s$ in 
\begin{multline*}
\left[\Lambda\rho_1+\alpha_1 K_1/2+\alpha_2 (K_2-\frac{1}{2/r_1-1}+\eta(\rho_1,K_2)),\right.\\
\left.\Lambda/r_1+\alpha_1(K_1-\log_2 K_1-3)+\alpha_2\zeta(1/r_1,K_2)\right].
\end{multline*}
\label{prop:second palier}
\end{proposition}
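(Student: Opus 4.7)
The plan is to mirror the proof of Proposition~\ref{prop:premier palier}, swapping the roles of the two clusters: near the second plateau, cluster~$1$ sits in the near-critical two-choice regime (so Lemma~\ref{rho=1} applies with capacity $K_1$), while cluster~$2$ is overloaded with $\rho \geq 1$ (so Lemma~\ref{rho>1} applies with capacity $K_2$). Concretely I would restrict to $\rho \in [\rho_1, 1/r_1]$. The hypothesis $\rho_1 > 1$ puts this interval inside $[1, 1/r_1]$, and by construction $\rho r_1$ ranges over $[1 - 2^{-K_1/2}, 1]$, which is exactly the window in which Lemma~\ref{rho=1} provides sharp estimates for the cluster-$1$ contribution.

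For $P(2,\rho,\alpha,K) = \alpha_1 P(\rho r_1, K_1) + \alpha_2 P(\rho, K_2)$ the upper bound combines Lemma~\ref{rho=1}(i), which uniformly yields $P(\rho r_1, K_1) \leq 4\sqrt{K_1}\,2^{-K_1/2}$, with Lemma~\ref{rho>1}(ii) applied to $P(\rho, K_2)$. The leading term $\sqrt{1-1/\rho}$ is increasing in $\rho$ and so maximised at $\rho = 1/r_1$, giving $\sqrt{1-r_1}$, whereas the error factor $1/\sqrt{\rho(\rho-1)}$ is decreasing and so maximised at $\rho = \rho_1$, which is why $\rho_1$ appears in the stated correction. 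The lower bound is obtained by discarding the non-negative cluster-$1$ contribution and observing that $\sqrt{1-1/\rho}$ achieves its minimum on $[\rho_1, 1/r_1]$ at $\rho = \rho_1$; substituting $1/\rho_1 = r_1/(1 - 2^{-K_1/2})$ and simplifying then produces the claimed expression in $r_1$ and $2^{-K_1/2}$.

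For the range of $s$ I would invoke equation~\eqref{def_s_c} together with the monotonicity of $s$ in $\rho$ provided by Lemma~\ref{2-choice-fp-monotonicity}: it then suffices to exhibit an upper bound on $s(\rho_1)$ matching the left endpoint and a lower bound on $s(1/r_1)$ matching the right endpoint. At $\rho = \rho_1$ the cluster-$1$ sum satisfies $\sum_k \nu_{1-2^{-K_1/2},K_1}(k) \leq K_1/2$ by Lemma~\ref{rho=1}(ii), and the cluster-$2$ sum is bounded above by Lemma~\ref{rho>1}(i); the $1/(2/r_1 - 1)$ visible in the statement, rather than $1/(2\rho_1 - 1)$, is simply the uniform-over-$[\rho_1, 1/r_1]$ weakening of that estimate. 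At $\rho = 1/r_1$ cluster~$1$ is exactly critical, so Lemma~\ref{rho=1}(ii) provides $\sum_k \nu_{1, K_1}(k) \geq K_1 - \log_2 K_1 - 3$, and Lemma~\ref{rho>1}(i) gives $\sum_k \nu_{1/r_1, K_2}(k) \geq \zeta(1/r_1, K_2)$. Assembling the three contributions $\Lambda \rho$, $\alpha_1 \sum_k \nu_{\rho r_1, K_1}(k)$ and $\alpha_2 \sum_k \nu_{\rho, K_2}(k)$ at the two endpoints delivers the announced interval.

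The main routine obstacle is bookkeeping the direction of the inequalities: Lemmas~\ref{rho=1} and~\ref{rho>1} each come in a one-sided form, and they must be applied with the correct sign so that the monotonicity of $s$ in $\rho$ converts the hypothesis ``$s$ in the stated interval'' into the conclusion ``$\rho \in [\rho_1, 1/r_1]$''. Beyond this, no new analytic input is needed; the substantive ingredients are already supplied by the finite-capacity estimates in Lemmas~\ref{rho=1}--\ref{rho>1}, and the algebraic manipulations are entirely analogous to those carried out in the proof of Proposition~\ref{prop:premier palier}.
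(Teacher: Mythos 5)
Your proposal matches the paper's proof essentially step for step: restrict to $\rho\in[\rho_1,1/r_1]$, apply Lemma~\ref{rho=1} to cluster~$1$ (since $\rho r_1\in[1-2^{-K_1/2},1]$) and Lemma~\ref{rho>1} to cluster~$2$ for the performance bounds, then bound the sums $\sum_k\nu_{\rho r_i,K_i}(k)$ at the two endpoints $\rho=\rho_1$ and $\rho=1/r_1$ to obtain the interval of $s$ via equation~\eqref{def_s_c} and monotonicity in $\rho$. Your added remarks on which endpoint maximises each error term and on the $1/(2/r_1-1)$ weakening are consistent with (and slightly more explicit than) what the paper writes.
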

\begin{proof}
Let  $\rho$ be in $[\rho_1,\frac{1}{r_1}]$. By Lemma~\ref{rho=1}, $P(\rho r_1,K_1)\leq 4\sqrt{K_1}2^{-K_1/2}$. Then, because $\rho_1>1$, by Lemma~\ref{rho>1},
 $\sqrt{1-r_1+r_12^{-K_1/2}}\leq P(\rho,K_2)\leq \sqrt{1-r_1}+K_12^{-K_1}(1+\frac{1}{2\sqrt{\rho_1(\rho_1-1)}})$.

Using the same two lemmas, taking $\rho=\rho_1$, as $\rho_1 r_1=1-2^{-K_1/2}$ is close to $1$, 
\[\sum\limits_{k=1}^{K_1}{\nu_{1-2^{-K_1/2},K_1}}(k) \leq \sum\limits_{k=1}^{K_1}{\nu_{1,K_1}}(k) \leq  K_1/2,
\]
 and 
\[
\displaystyle{\sum\limits_{k=1}^{K_2}{\nu_{\rho_1,K_2}}(k)\leq K_2-\frac{1}{2/r_1-1}+\eta(\rho_1,K_2))}
\]
and taking  $\rho=1/r_1$, $\sum\limits_{k=1}^{K_1}{\nu_{1,K_1}}\geq K_1-\log_2 K_1-3$ and $\sum\limits_{k=1}^{K_2} \nu_{1/r_1,K_2}(k)\geq \zeta(1/r_1,K_2)$.
Putting the last two arguments together ends the proof.
\end{proof}
\subsection{Plotting the performance} 
First, the proportion of problematic queues $P_i(\rho)$ in each cluster $i$ is plotted as a function of $s$. The value of the different parameters are indicated in Figure~\ref{fig: deux clusters 1}.

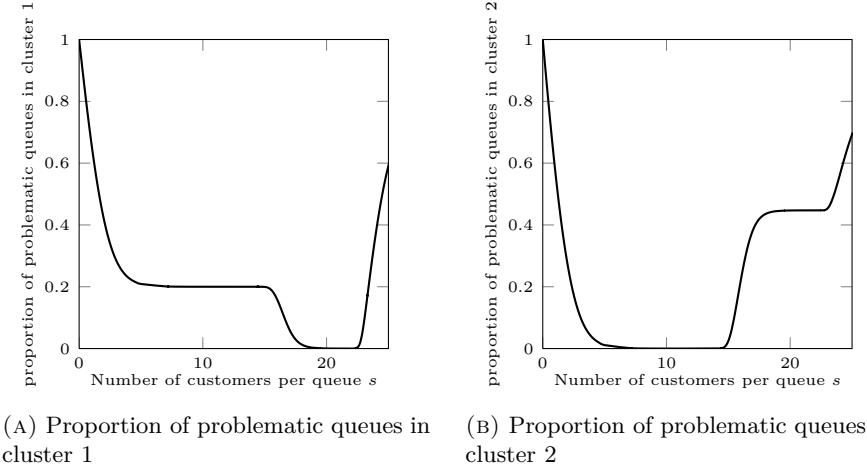
\begin{figure} [h]
   \centering      
        \begin{subfigure}[b]{0.45\textwidth}
                \centering
                \begin{tikzpicture}
                 \begin{axis}[font=\tiny,
                                       label style={font=\tiny},
      			           xlabel={Number of customers per queue $s$},
			           xlabel style={yshift=+9pt},		
                                       ylabel= proportion of problematic queues in cluster 1,
                                       ylabel style={yshift=-16pt},
                                       xmin=0, xmax=25, ymin=0., ymax=1,
                                      width=\linewidth,height=\linewidth]
                \addplot[thick, smooth] table[x=x, y=y] {donnees/Prob1.dat};
                \end{axis}
      \end{tikzpicture}
       \hspace{0.05\linewidth}
       \caption{ Proportion of problematic queues  in cluster 1}
       \label{fig:Prob1}
        \end{subfigure}
   \quad
  \begin{subfigure}[b]{0.45\textwidth}
                \centering
                \begin{tikzpicture}
                 \begin{axis}[font=\tiny,
                                       label style={font=\tiny},
      			           xlabel={Number of customers per queue $s$},
			           xlabel style={yshift=+9pt},		
                                       ylabel= proportion of problematic queues in cluster 2,
                                       ylabel style={yshift=-16pt},
                                       xmin=0, xmax=25, ymin=0., ymax=1,
                                      width=\linewidth,height=\linewidth]
                \addplot[thick, smooth] table[x=x, y=y] {donnees/Prob2.dat};
                \end{axis}
      \end{tikzpicture}
       \hspace{0.05\linewidth}
       \caption{Proportion of problematic queues in cluster 2}
       \label{fig:Prob2}
        \end{subfigure}

        \caption{Proportion of problematic queues in cluster as a fonction of the mean number of customers per queue $s$, with $\Lambda=1,3$, $K_1=20$, $K_2=25$, $r_1=0,8$, $r_2=1$, $\alpha_1=0,4$, $\alpha_2=1-\alpha_1=0,6$.}\label{fig: deux clusters 1}
\end{figure}
Both regions analysed in Sections~\ref{premier_palier} and ~\ref{second_palier} are observed for both clusters:
\begin{itemize}
\item A region where the proportion of problematic queues is very low, as for the 2-choice regulated homogeneous model. It corresponds to the interval $[18,22]$ of values of $s$ for cluster $1$ (Figure~\ref{fig:Prob1}) and  $[7,14]$ for cluster 2 (Figure~\ref{fig:Prob2}).
\item  A region where the proportion of problematic queues is quasi constant, which matches the  region where it is minimal for the other cluster. 
 For cluster 1, it is interval $[7,14]$ and on this interval $P_1(\rho)\approx 1-r_1=0,2$. For cluster 2, on $P_2(\rho)\approx \sqrt{1-r_1}\approx 0,45$ on $[18,22]$.
\end{itemize}

Then Figure~\ref{fig:Prob} plots the global proportion of problematic queues for the model with the same parameters. Note the presence of the two plateaux:
\begin{itemize}
\item the first one for $\rho<1$, $\rho$ close to  $1$, due to  cluster1-queues. By equation~\eqref{perf_totale}, 
\begin{equation*}
P(\rho)\approx \alpha_1P_1(\rho,K_1)\approx \alpha_1 (1-r_1)=0,08.
\end{equation*}
It corresponds to the interval $[7,14]$ of $s$ . By blue dashed lines, the interval of $s$ $[9.42,12.34]$ determined in  Proposition~\ref{prop:premier palier} is drawn. Recall that on this interval, explicit bounds on performance are given.
\item a second one for $\rho$ close to $1/r_1$, due to   cluster2-queues. On this interval,
\begin{equation*}
P(\rho)\approx \alpha_2P_2(\rho,K_2) \approx \alpha_2\sqrt{1-r_1}\approx 0,27.
\end{equation*} 
This plateau corresponds to  $s$ in $[18,22]$. Green dashed lines draw bounds obtained in  Proposition~\ref{prop:second palier}, i.e. $s$ in $[19.9,21.1]$.
\end{itemize}

\subsection{Influence of the model parameters}
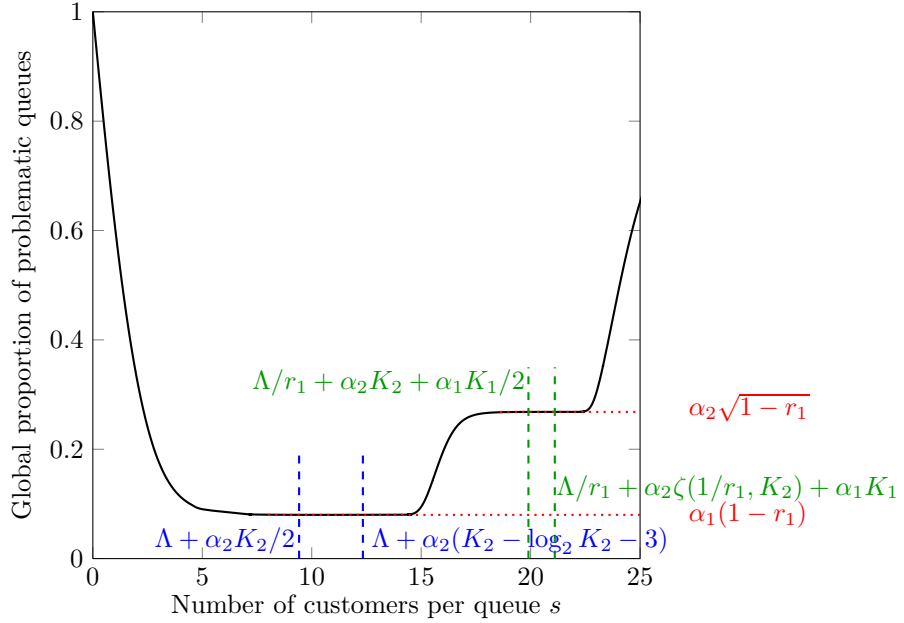
\begin{figure}[h]
    \centering
        \begin{subfigure}[b]{0.7\textwidth}
                \centering
                \begin{tikzpicture}
                 \begin{axis}[xlabel={Number of customers per queue $s$},
			           xlabel style={yshift=+5pt},		
                                       ylabel=Global proportion of problematic queues,
                                       ylabel style={yshift=-9pt},
                                       xmin=0, xmax=25, ymin=0., ymax=1,
                                      width=\linewidth,height=\linewidth]               
                \addplot[thick, smooth] table[x=x, y=y] {donnees/Prob.dat};
                \addplot[red,dotted,thick,smooth] table[x=x,y=y] {donnees/premierpalier.dat};
                \addplot[red,dotted,thick,smooth] table[x=x,y=y] {donnees/secondpalier.dat};
                \addplot[blue,dashed,thick,smooth] table[x=x,y=y] {donnees/borne1.dat};
                \addplot[blue,dashed,thick,smooth] table[x=x,y=y] {donnees/borne2.dat};
                \addplot[green!60!black,dashed,thick,smooth] table[x=x,y=y] {donnees/borne3.dat};
                \addplot[green!60!black,dashed,thick,smooth] table[x=x,y=y] {donnees/borne4.dat};
                \end{axis}
      \end{tikzpicture}
      \put(10,40){\textcolor{red}{$\alpha_1(1-r_1)$}}
      \put(10,80){\textcolor{red}{$\alpha_2\sqrt{1-r_1}$}}
       \put(-192,30){\textcolor{blue}{$\Lambda+\alpha_2 K_2/2$}}
      \put(-110,30){\textcolor{blue}{$\Lambda+\alpha_2 (K_2-\log_2 K_2-3)$}}
      \put(-155,90){\textcolor{green!60!black}{$\Lambda/r_1+\alpha_2 K_2+\alpha_1 K_1/2$}}
      \put(-40,50){\textcolor{green!60!black}{$\Lambda/r_1+\alpha_2 \zeta(1/r_1,K_2)+\alpha_1 K_1$}} 
       \hspace{0.05\linewidth}
       \end{subfigure}
        \caption{ Global proportion of problematic queues as a fonction of the mean number of customers per queue $s$.}
       \label{fig:Prob0} 
        \end{figure}

\subsubsection{ Influence of $\Lambda$.}
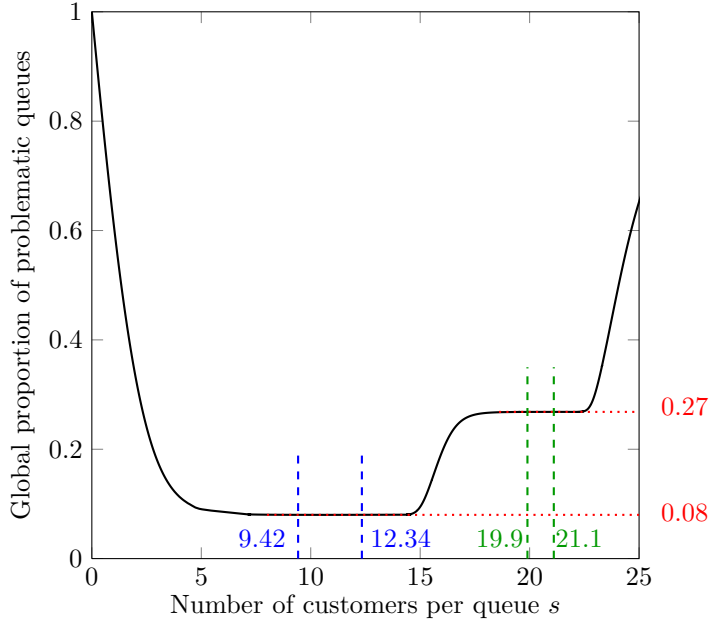
\begin{figure}[h]
    \centering
        \begin{subfigure}[b]{0.7\textwidth}
                \centering
                \begin{tikzpicture}
                 \begin{axis}[xlabel={Number of customers per queue $s$},
			           xlabel style={yshift=+5pt},		
                                       ylabel=Global proportion of problematic queues,
                                       ylabel style={yshift=-9pt},
                                       xmin=0, xmax=25, ymin=0., ymax=1,
                                      width=\linewidth,height=\linewidth]               
                \addplot[thick, smooth] table[x=x, y=y] {donnees/Prob.dat};
                \addplot[red,dotted,thick,smooth] table[x=x,y=y] {donnees/premierpalier.dat};
                \addplot[red,dotted,thick,smooth] table[x=x,y=y] {donnees/secondpalier.dat};
                \addplot[blue,dashed,thick,smooth] table[x=x,y=y] {donnees/borne1.dat};
                \addplot[blue,dashed,thick,smooth] table[x=x,y=y] {donnees/borne2.dat};
                \addplot[green!60!black,dashed,thick,smooth] table[x=x,y=y] {donnees/borne3.dat};
                \addplot[green!60!black,dashed,thick,smooth] table[x=x,y=y] {donnees/borne4.dat};
                \end{axis}
      \end{tikzpicture}
    \put(0,40){\textcolor{red}{$0.08$}}
      \put(0,80){\textcolor{red}{$0.27$}}
       \put(-160,30){\textcolor{blue}{$9.42$}}
      \put(-110,30){\textcolor{blue}{$12.34$}}
      \put(-70,30){\textcolor{green!60!black}{$19.9$}}
      \put(-40,30){\textcolor{green!60!black}{$21.1$}} 
           \hspace{0.05\linewidth}
       \end{subfigure}
        \caption{ Global proportion of problematic queues as a fonction of the mean number of customers per queue $s$, with $\Lambda=1,3$, $K_1=20$, $K_2=25$, $r_1=0,8$, $r_2=1$, $\alpha_1=0,4$, $\alpha_2=1-\alpha_1=0,6$.}
       \label{fig:Prob} 
        \end{figure} 
\begin{figure}[h]
    \centering
        \begin{subfigure}[b]{0.7\textwidth}
                \centering
                \begin{tikzpicture}
                 \begin{axis}[font=\tiny,
                                       label style={font=\tiny},
      			           xlabel={Number of customers per queue $s$},
			           xlabel style={yshift=+6pt},		
                                       ylabel=Global proportion of problematic queues,
                                       ylabel style={yshift=-6pt},
                                       xmin=0, xmax=30, ymin=0., ymax=1,
                                      width=\linewidth,height=\linewidth]               
                \addplot[red,thick, smooth] table[x=x, y=y] {donnees/Prob.dat};
                \addlegendentry{\footnotesize $\Lambda=1,3$};
                \addplot[blue,dashed, thick, smooth] table[x=x, y=y] {donnees/Probgrandlmu.dat};
                \addlegendentry{\footnotesize $\Lambda=5$};
               \end{axis}
      \end{tikzpicture}
       \hspace{0.05\linewidth}
       \end{subfigure}
        \caption{Influence of $\Lambda$ on system performance, with  $K_1=20$, $K_2=24$, $r_1=0,8$, $r_2=1$, $\alpha_1=0,4$and  $\alpha_2=1-\alpha_1=0,6$.}
          \label{fig:influence Lambda}
        \end{figure}
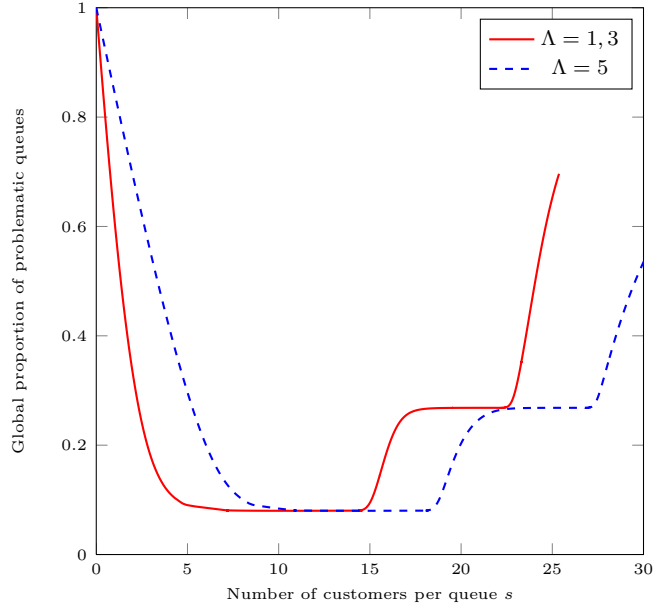       

\bigskip
\bigskip
Eventually,  Figure~\ref{fig:influence Lambda} represents the influence of $\Lambda$ on the model. By  Definition~\ref{Lambda}, $\Lambda$ is directly related to the  load characteristics of the different clusters  and generalizes  $\lambda/\mu$ when $C=1$. By equation~\eqref{def_s_c},  $\Lambda$ appears in the expression of $s$ only by the term $\Lambda \rho$, increasing  $\Lambda$ will shift the curve to the right just slightly modifying  the different interval shape  or  size. 

\subsubsection{Influence of $K$.} The different cacacities do not change the performance for the plateaux. They change the intervals where these performances are achieved. The intervals are shifted to the right where the $K_i$ are larger.

\subsubsection{Influence of the utilizations.}  A large imbalance between queues increases the probability of problematic queues. The second plateau is also shifted to the right.

\subsubsection{Influence of the cluster sizes.} A small cluster with the lowest utilisation, $\alpha_1 small$, gives a lower proportion of problematic queues. Moreover, while $\alpha_2$ is large, the plateaux are shifted to the right.

To conclude, the worse configuration is a small cluster of  queues with high utilization with respect to a large cluster of queues with low utilization.

\section{Dimensioning the System}
\label{sec:sizing}
Dimensioning  a bike-sharing system means  choosing both the capacities and the value of $s$ in order to have a minimum proportion of problematic queues. 
The fluctuations in time of the parameters is one of the problems to manage the system.

Fluctuations of parameters like arrival rates or  probabilities to return in queues due for example to flows between housing and working areas,  peak activities, etc. just affect quantity $\Lambda$. The clusters are assumed to be fixed during the time, thus the vectors $\alpha$ and $K$ are fixed. Therefore, the  problem reduces to obtain a system able to work for the different values of utilization, more precisely from a very low $\Lambda_{min}$, close to $0$ to a maximum $\Lambda_{max}$ value of $\Lambda$.  

{\bf One-cluster case.} 

One has to design   a system able to work efficiently for different arrival rates. We consider a two-choice regulated homogeneous system where the capacity $K$ of the queues and also the proportion $s$ of customers per queue has to be determined. One has to manage a system working from $\lambda_{min}$ to $\lambda_{max}$. The mean trip time does not vary over time. In Corollary~\ref{FG_th2}, an interval for $s$ $[K/2+\lambda/\mu,K-\log_2 K-3]$ is determined where the performance is very good. $K$ must be chosen such that both intervals for $\lambda=\lambda_{min}$ and  $\lambda=\lambda_{max}$ have a non empty intersection, that is written
\begin{equation*}
K/2+\lambda_{max}/\mu \leq \lambda_{min}\mu+K-\log_2{K}-3.
\end{equation*}
It is equivalent to
\begin{equation*}
K/2-\log_2{K}\geq (\lambda_{max}-\lambda_{min})/\mu+3.
\end{equation*}
Function $\psi:x\mapsto x/2-\log_2{x}$ is strictly increasing  on $[2/\ln{2},+\infty[$ and  positive for  $x\geq 4$. It admits an inverse function defined on $]0,+\infty[$ and it is just obtained when  $K\geq \psi^{-1}((\lambda_{max}-\lambda_{min})/\mu+3)$. For example, when $\lambda_{max}-\lambda_{min}/\mu=10$, take just $K\geq 37$. Then $s$ must be taken between $K/2+\lambda_{max}/\mu$ and $\lambda_{min}\mu+K-\log_2{K}-3$.

Note that in practice, bounds are not so tight even if they provide an interval for $s$ where the performance is very weak. In fact this interval is very close to 
$[K/2+\lambda/\mu,K+\lambda/\mu]$. A lower minimal value of $K$ can be expected. In the previous numerical  example, this minimal value is not far from $20$.

 It is commonly admitted that in Paris $s$ is fixed to $0.5K$ while in Lyon to $0.7K$, $K$ being a mean value of the queue capacities. A data analysis would be necessary to discuss the relevancy of these choices. Of course, a homogeneous model is a very rough picture of the system.

{\bf Two-cluster case.} 
The same kind of dimensioning could be attempted in this case. Because if the two plateaux, the idea is to manage the system in order to obtain the lowest proportion of problematic queues. There are two cases according which of the two plateaux give the best performance. the study is limited to one of these two cases, the other one being more intricate.

If $\alpha_1(1-r_1)<\alpha_2\sqrt{1-r_1}$ i.e. $\alpha_1\sqrt{1-r_1}/\alpha_2<1$ then one has to choose $K$ such that the first plateaux for $\Lambda=\Lambda_{min}$ and $\Lambda=\Lambda_{max}$
have a non empty intersection, that is written
\begin{equation*}
\alpha_2 K_2/2+\Lambda_{max}\leq \Lambda_{min}\mu+\alpha_2(K_2-\log_2{K_2}-3).
\end{equation*}
It is equivalent to
\begin{equation*}
K_2/2-\log_2{K_2}\geq (\Lambda_{max}-\Lambda_{min})/\alpha_2+3
\end{equation*}
then to 
\begin{equation*}
K \geq \psi^{-1}((\Lambda_{max}-\Lambda_{min})/\alpha_2+3).
\end{equation*}
Moreover, for such a $K_2$, $s$ must be taken between $\alpha_2 K_2/2+\Lambda_{max}/\mu$ and $\Lambda_{min}\mu+\alpha_2(K_2-\log_2{K_2}-3)$.
For example, when $\Lambda_{max}-\Lambda_{min}=10$, $\alpha_2=1/2$, take just $K\geq 57$ and for $s$ in the previous interval, the proportion of problematic queues is minimal equal to $\alpha_1(1-r_1)$.

\section{Conclusion}
\label{sec:conclusion}
 Via the analysis of the large closed network with two clusters and a two-choice strategy, the behavior of this model is quite well understood. The results underscore that, due of the heterogeneity, the system does not perform well in the sense that there there always a cluster with a non negligible proportion of problematic queues. It cannot disappear by sizing the system. Capacities and fleet size can be ajusted to obtain the lowest proportion of problematic queues, but this value is a function of the cluster size and imbalance of the stations.

In the dimensioning problem, knowing precisely the intervals where the proportion of problematic queues is low is crucial. Bounds derived for that purpose do not seem to be tigh and need to be improved. Moreover, this analysis should be extended to any number of clusters.

In applications, due to  incentives, only a fraction of users follow the rule. Though the  convergence results hold, the behavior remains to understand. This is a challenging problem for future work.

\section{Appendix}
\label{sec:appendix}
{Proof of Lemma ~\ref{2-choice-fp-monotonicity}}
\begin{proposition}
If $\rho\leq \rho'$ then, for each $k$, $1\leq k\leq K$, for each $t>0$, 
\[
V^{N,\rho}_k(t)\leq_{st}V^{N,\rho'}_k(t).
\]
\end{proposition}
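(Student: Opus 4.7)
The natural strategy is to build an explicit coupling of the two queueing processes on a common probability space that preserves a pointwise ordering at the level of labeled queues. Specifically, let $(X^\rho_i(t))_{1\leq i\leq N}$ and $(X^{\rho'}_i(t))_{1\leq i\leq N}$ denote the number of customers in queue $i$ at time $t$ in the two systems, and assume they start from a common initial state (e.g.\ all queues empty), so the invariant $X^\rho_i(t)\leq X^{\rho'}_i(t)$ for every $i$ holds at $t=0$. If this invariant is preserved by the joint dynamics, then for every $k$ we have $\sum_i \ind_{X^\rho_i(t)>k}\leq \sum_i \ind_{X^{\rho'}_i(t)>k}$ almost surely, which immediately yields $V^{N,\rho}_k(t)\leq_{st} V^{N,\rho'}_k(t)$.

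The coupling uses a single Poisson process of rate $\rho' N$ carrying the potential arrivals. At each arrival epoch I draw an independent unordered pair of queues $\{i,j\}$ uniformly at random, an independent fair coin to break ties, and an independent variable $U\sim\mathrm{Unif}(0,1)$. The arrival is used in the $\rho'$-system unconditionally and in the $\rho$-system only when $U<\rho/\rho'$ (Poisson thinning), so the $\rho$-system sees a Poisson arrival process of the correct rate $\rho N$. In each system the (accepted) arrival is routed to the less loaded of $\{i,j\}$ in that system, with ties broken by the common coin, and rejected if the targeted queue is already at capacity $K$. Departures are coupled via a single i.i.d.\ family of rate-$1$ Poisson clocks attached to each queue index $i$: a tick of clock $i$ decrements $X^\rho_i$ and $X^{\rho'}_i$ whenever they are strictly positive.

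The main step is to verify that the invariant $X^\rho_i\leq X^{\rho'}_i$ is preserved by each admissible event. Departures are immediate: if both coordinates are positive they both drop by one; if $X^\rho_i=0$ and $X^{\rho'}_i\geq 1$, only the primed coordinate decreases, and the inequality is maintained; the remaining case is ruled out by the invariant. Arrivals require a short case analysis. Suppose WLOG $X^{\rho'}_i\leq X^{\rho'}_j$, so the primed arrival targets queue $i$. If $X^\rho_i\leq X^\rho_j$ the unprimed arrival (when accepted) targets the same queue~$i$, and the invariant at $i$ is preserved because $X^\rho_i\leq X^{\rho'}_i$ and capacity rejection in the primed system forces $X^\rho_i<K$. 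The delicate case is $X^\rho_j<X^\rho_i$: the unprimed arrival targets $j$, but then the chain of inequalities
\[
X^\rho_j<X^\rho_i\leq X^{\rho'}_i\leq X^{\rho'}_j
\]
gives $X^\rho_j+1\leq X^{\rho'}_j$, so the increment at $j$ in the unprimed system does not break the invariant. Moreover the strict inequality $X^\rho_j<X^\rho_i\leq K$ guarantees the unprimed queue $j$ is not full, so the arrival is actually accepted.

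I expect this last case to be the only real obstacle, since it is the situation where the two-choice decision makes the two systems route to \emph{different} target queues, and it is only through the chained inequality above that the monotonicity is rescued. Once the invariant is seen to be preserved by all transitions, the coupling is valid for all $t\geq 0$; the stochastic ordering $V^{N,\rho}_k(t)\leq_{st} V^{N,\rho'}_k(t)$ follows, and letting $N\to\infty$ together with Lemma~\ref{2-choice-fp-existence} transfers the monotonicity to the unique equilibrium point $\nu_{\rho,K}$, as required.
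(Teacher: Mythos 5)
Your proposal is correct and rests on the same overall strategy as the paper --- a coupling that maintains the per-queue pathwise domination $X^\rho_i(t)\leq X^{\rho'}_i(t)$ --- but your realization of the coupling is genuinely different, and arguably cleaner. The paper couples at the level of individual customers: arrivals are split into blue (rate $\rho N$) and red (rate $(\rho'-\rho)N$) streams, the invariant is that the blue customers are \emph{identical} in both systems (same number, same order, same residual service times, reds queued behind blues), and maintaining it requires the \emph{repaint} and \emph{exchange} operations together with a final check that the composite service time $\sigma 1_{\sigma<\tau}+(\tau+\sigma)1_{\sigma>\tau}$ is still exponential. You instead couple at the level of the queue-length vector, thinning a single rate-$\rho'N$ arrival stream and attaching one shared rate-$1$ clock to each queue index; this is legitimate precisely because the queues are single-server with exponential service, so the departure rate of a queue depends only on whether it is nonempty, and it dispenses with all the colour bookkeeping. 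Your ``delicate case'' (the two systems route to different queues, rescued by the chain $X^\rho_j<X^\rho_i\leq X^{\rho'}_i\leq X^{\rho'}_j$, which yields $X^\rho_j+1\leq X^{\rho'}_j$ by integrality) is exactly the content of the paper's cases (b), (c), (e), (f). Two points you should make explicit to complete the argument: first, your case split elides the situations where exactly one of the two systems has a tie between the chosen queues, so the common coin may route the two arrivals to different queues even though both orderings are weak; these sub-cases close by the same chained inequality, the needed strict inequality now coming from the non-tied system. Second, the statement about $\nu_{\rho,K}$ requires letting $t\to\infty$ as well as $N\to\infty$, using ergodicity to pass the pathwise ordering at finite $t$ to an ordering of the stationary distributions and hence of the equilibrium points.
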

\begin{proof}
 Note that there is no explicit expression for the $\nu_{\rho,K}(k)$'s for $K<+\infty$. If $K=+\infty$, using this expression~\eqref{formule_magique}, it is clear that, for each $k$, $\nu_{\rho,K}(k)$ is an increasing function of $\rho$.  It leads to prove this strongest result.

This result is proved by coupling. Let $\rho$ and $\tilde{\rho}$ be such that  $\rho<\tilde{\rho}$. Blue customers arrive according to a Poisson process with parameter $\rho N$. Independently, red customers arrive according to a Poisson process with parameter $(\tilde{\rho}-\rho )N$. Take two join-the-shortset-queue-among-two systems with $N$ queues described  as previously. In system 1, the arriving process is the process with blue customers. In system 2, the arriving process is the superposition of the processes with blue and red customers, which is Poisson  with parameter $\tilde{\rho}$. In both systems, blue customers have the same arriving and service times, and choose the same two queues, ties being solved with the same Bernoulli random variables.

Let us define the two following operations:\\
- {\em exchange} a red and a blue customer means that they change both color and residual service times.\\
- {\em  repaint} a red in blue occurs at the arrival of a blue customer. This latest one is lost, but the red one (already queuing) takes his color (blue) and service time.

We construct a coupling {\em repainting} sometimes a red customer in blue or {\em exchanging} a red and a blue customer, such that, at each time, at each queue, the blue customers are the {\em same} in both systems. {\em The same} means there is the same number of customers, in the same order, with the same residual service times, and at the beginning of the queue (red are always behind).  Let this assertion at time $t$ be called $\cal{A}(t)$. It implies that for each $t>0$, $L(t)\leq \tilde{L}(t)$.

 For that we prove that, if $\cal{A}$ is true just before an arrival or a departure of a customer, then it is true after this time. It is obvious at a  departure time of a customer or at an arrival time of a red customer.

At an arrival time of a blue customer, say time $t$, let the two choosen queues be $i$ and $j$. Assume that  $\tilde{L}_i(t_-)\leq \tilde{L}_j(t_-)$. Recall $\cal{A}(t-)$ is true. Prove $\cal{A}(t)$ distinguishing different cases.

\begin{itemize}
\item[(a)] If $\tilde{L}_i(t_-)< \tilde{L}_j(t_-)$ and $L_i(t_-)< L_j(t_-)$, then the blue customer is accepted in queue $i$ in both systems. So $\cal{A}(t)$
holds.

\item[(b)] If $\tilde{L}_i(t_-)< \tilde{L}_j(t_-)$ and $L_i(t_-)= L_j(t_-)$, 
 then the blue customer is accepted in queue $i$  for system 2. In system 1, there is a tie. If it is solved with the blue customer in queue $i$,  $\cal{A}(t)$
holds. Otherwise it is solved with the blue customer in queue $j$. But there is at least one red customer in queue $j$, so repaint the first one in the queue in blue and the arriving blue customer (in queue $i$ in system 2) in red. It means also that the service time of the arriving blue customer is exchanged with the residual service time of the red one. Notice that this residual service time has also an exponential distribution with parameter $1$. Thus  $\cal{A}(t)$
holds. 
\item[(c)] If $\tilde{L}_i(t_-)= \tilde{L}_j(t_-)<K$ and $L_i(t_-)\not = L_j(t_-)$, for example $L_i(t_-)< L_j(t_-)$. Thus in system 1, the arriving blue customer goes to queue $i$. There is a choice in system 2 and do as in case (b).
\item[(d)] If $\tilde{L}_i(t_-)= \tilde{L}_j(t_-)=K$ and $L_i(t_-) = L_j(t_-)=K$, the blue customer is rejected in both systems. $\cal{A}(t)$
holds. 
\item[(e)] If $\tilde{L}_i(t_-)= \tilde{L}_j(t_-)=K$ and $L_i(t_-)$ or  $L_j(t_-)<K$. The arriving blue customer is accepted in system 1, for example in queue $i$. But the arriving blue customer is rejected in system 2. Nevertheless, in this case,  there is one red customer in queue $i$ thus the first one  is painted in blue, his  remaining service time becomes the same as the service time of the blue customer accepted in system 1. It is clear that his total service time has an exponential  distribution with parameter 1.

\item[(f)] If $\tilde{L}_i(t_-)< \tilde{L}_j(t_-)$ and $L_i(t_-)>L_j(t_-)$, in system 1, the  blue customer is accepted at queue $i$, in system 2 at queue $j$. But in system 2, there is at least one red customer in $j$, thus {\em exchange} the first red one with the arriving blue one as in case (b). Thus $\cal{A}(t)$
holds.
\end{itemize}
 Moreover, it remains to check that process $(\tilde{L}(t))$ considered in this coupling is indeed a join-the-shortset-queue-among-two system with parameter $\tilde{\rho}N$. For that, it is sufficient to check that the random variable $X=\sigma 1_{\sigma<\tau}+(\tau+\sigma) 1_{\sigma>\tau}$
where $\sigma$, $\sigma'$ and $\tau$ are i.i.d. random variables,  with exponential  distribution with parameters $1$, $1$ and $\rho$,
has an exponential  distribution with parameters $1$. It is straightforward, replacing first $\tau$ by $t$.
\end{proof}

Letting $N$ tending to infinity, then $t$, it gives that, for each $k\geq 1$, $u_k$ is an increasing function of $\rho$.

\subsection{Proof of Lemma~\ref{rho>1}}

It yields from definition that, for $0\leq i\leq K$,  $w_i=1-u_{K-i+1}$. Plugging in equation ~\eqref{two-choice-1rec}, for $0\leq i\leq K$,  $w_{i+1}=1-\sqrt{1-(w_i-w_K)/\rho}$. In particular, $w_0=1-u_{K+1}=1$ and  $w_K=1-u_1\leq K2^{-K}$. Indeed,  using that for $\rho=1$, $1-u_1=\epsilon\leq K2^{-K}$ (see the proof of \cite[Theorem 2]{FrickerGast-velib} for details), and applying Lemma ~\ref{2-choice-fp-monotonicity}, which implies that $u$ is an increasing function of $\rho$, component by component, the following equation holds,
\begin{align}\label{daim}
0\leq \delta\leq K 2^{-K}
\end{align}
But, by induction then simple algebra, for $0\leq i\leq K$,
\begin{align*}
w_{i+1}& \geq  1-(1-\frac{w_i-w_K}{2\rho})\\
& \geq  \frac{w_i-w_K}{2\rho}\\
& \geq  \frac{w_0}{(2\rho)^{i+1}}-\frac{w_K}{2\rho} \sum\limits_{k=0}^i{(\frac{1}{2\rho})^k}\\
& \geq  \frac{1}{(2\rho)^{i+1}}-\frac{w_K}{2\rho} \frac{1-(1/(2\rho))^{i+1}}{1-1/(2\rho)}.
\end{align*}
For $\sum\limits_{k=1}^K{u_k}(\rho,K)=K-\sum\limits_{k=1}^K{w_{K-k+1}}(\rho,K)=K-\sum\limits_{i=1}^K{w_i}$,
     summing the previous inequalities for $i$ from $1$ to $K$, it holds that
\begin{align*}
\sum\limits_{k=1}^K{u_k}(\rho,K)&\leq K-\frac{\frac{1}{2\rho}-(\frac{1}{2\rho})^{K+1}}{1-\frac{1}{2\rho}}+\frac{w_K}{2\rho-1}(K-\frac{\frac{1}{2\rho}-(\frac{1}{2\rho})^{K+1}}{1-\frac{1}{2\rho}})\\
&\leq K-\frac{1}{2\rho-1}+\eta(\rho,K).
\end{align*}

As  $u_1-1\leq 0$, using again equation ~\eqref{two-choice-1rec},
\begin{equation*}
u_K=\displaystyle{\sqrt{\frac{\rho-u_1}{\rho}}=\sqrt{\frac{\rho-1}{\rho}+\frac{u_1-1}{\rho}}\geq \sqrt{\frac{\rho-1}{\rho}}}.
\end{equation*}
Besides, as obtaining equation ~\eqref{two-choice-1rec}, from  equation ~\eqref{two-choice}, for $1\leq k\leq K$, $u_{k+1}-\rho u_k^2=u_{K+1}-\rho u_K^2=-\rho u_K^2$. Thus, 
\begin{equation*}
u_k=\displaystyle{\sqrt{u_{k+1}/\rho+u_K^2}\geq \sqrt{(u_{k+1}/\rho-1)+1}}.
\end{equation*}

By induction, it is then easy to prove that for  $0\leq k\leq K$, $u_k\geq x_{K-k+1}$. Summing from $1$ to $K$ yields the first inequality of (i).
%\begin{equation*}
%\sum\limits_{k=1}^K{u_k}\geq \sum\limits_{k=1}^K{x_{K-k}}=\sum\limits_{i=0}^{K-1}{x_i}
%\end{equation*}
Let us prove (ii). By definition,
\begin{align*}
P(\rho,K)& =1-u_1+u_K\\
& =1-u_1+\sqrt{\frac{\rho-u_1}{\rho}}\\
& =\delta+\sqrt{1-\frac{1}{\rho}+\frac{\delta}{\rho}}
\end{align*}
Plugging equation \eqref{daim} in it leads to (ii). It  ends the proof.

\end{document}